\def\one{{\bf 1}\hskip-.5mm}
\def\E{{\mathbb E}}
\def\P{{\mathbb P}}
\def\R{{\mathbb R}}
\def\Z{{\mathbb Z}}
\def\V{{\mathbb V}}
\def\N{{\mathbb N}}
\def\F {{\mathcal F}}
\def\AA{{\mathcal A}}
\def\B{{\mathcal B}}
\def\Y{{\mathbb Y}}
\def\L {{\Lambda}}
\def\s {{\sigma}}
\newtheorem{theo}{Theorem}
\newtheorem{proposition}{\indent Proposition}
\newtheorem{defin}{\indent Definition}
\newtheorem{assumption}{Assumption}
\def\expandafter\UrlBreaks\expandafter{\UrlBreaks
  \do\a\do\b\do\c\do\d\do\e\do\f\do\g\do\h\do\i\do\j%
  \do\k\do\l\do\m\do\n\do\o\do\p\do\q\do\r\do\s\do\t%
  \do\u\do\v\do\w\do\x\do\y\do\z\do\A\do\B\do\C\do\D%
  \do\E\do\F\do\G\do\H\do\I\do\J\do\K\do\L\do\M\do\N%
  \do\O\do\P\do\Q\do\R\do\S\do\T\do\U\do\V\do\W\do\X%
  \do\Y\do\Z}
\g@addto@macro{\endabstract}{\@setabstract}
\newcommand{\authorfootnotes}{\renewcommand\thefootnote{\@fnsymbol\c@footnote}}%
\begin{document}
\begin{center}
  \LARGE 
  Estimation of neuronal interaction graph from spike train data \par \bigskip
\bigskip

  \normalsize
  \authorfootnotes
  Brochini, L.\footnote{To whom correspondence should be addressed: brochini@usp.br}\textsuperscript{1}, Galves, A.\textsuperscript{2},
  Hodara, P.\textsuperscript{1}, Ost, G.\textsuperscript{1} and
 Pouzat, C.\textsuperscript{2} \par \bigskip

  \textsuperscript{1}Universidade de S\~ao Paulo, S\~ao Paulo,  Brazil \par
  \textsuperscript{2}Paris-Descartes University and CNRS UMR 8145, Paris, France\par \bigskip

\end{center}

\vspace{0.3cm}

\begin{abstract}
{One of the main current issues in Neurobiology concerns the understanding of interrelated spiking activity among multineuronal ensembles and differences between stimulus-driven and spontaneous activity in neurophysiological experiments. Multi electrode array recordings that are now commonly used monitor neuronal activity in the form of spike trains from many well identified neurons. A basic question when analyzing such data is the identification of the directed graph describing ``synaptic coupling'' between neurons. In this article we deal with this matter working with a high quality multielectrode array recording dataset~\cite{pouzat2015} from the first olfactory relay of the locust, \emph{Schistocerca americana}. From a mathematical point of view this paper presents two novelties. First we propose a procedure allowing to deal with  the small sample sizes met in actual datasets. Moreover we address the sensitive case of partially observed networks. Our starting point is the procedure introduced in \cite{dglo}. We evaluate the performance of both original and improved procedures through simulation studies, which are also used for parameter tuning and for exploring the effect of recording only a small subset of the neurons of a network. }     
\end{abstract}
\textbf{keywords:} \small{ Statistical model selection,	interaction graph, biological neural networks, interacting chains of variable memory length, spike train}
\section{Introduction}

The issue of how neuronal networks are built to produce physiologically meaningful activity has been a long time concern in Neurobiology. While brain imaging techniques have been improving our understanding of how brain areas are activated to generate behavior, at the network level lies the challenge to understand how populations of neurons operate concurrently to process information that can be attributed to function of macroscopic structures. As multiunit recording techniques -- such as multi-electrode arrays  -- evolve concomitantly with computational power \citep{StevensonKording2011}, neural data analysis tools should also advance to be able to address current issues in Neurobiology. 

The statistical analysis of spike trains considered as what would now be called \emph{point process} can be traced back at least to \citet{Hagiwara:1954} and \citet{Fitzhugh_1958} who worked with single neuron data and to \citet{RodieckEtAl_1962} who worked with ``multi-variate'' spike trains (from 4 neurons). This work was reviewed by \citet{PerkelEtAl_1967} and \citet{PerkelEtAl_1967b} who also introduced in the spike train analysis field most of the modern point process terminology while focussing on mainly descriptive approaches.

A major subsequent progress was made by \citet{Brillinger_1976} and \citet{BrillingerEtAl_1976} who introduced identification methods transforming cross-correlograms and cross-intensity diagrams into quantitative tools. 
A subsequent progress was again made by \citet{brill88} in which an explicit conditional intensity model taking into account the synaptic interactions among neurons was proposed -- see also \citet{ChornoboyEtAl_1988} for a work done independently and \citet{Brillinger_1992} for a review.

The framework proposed by these papers can be summarized as follows: the activity of a population of neurons can be described as a system of interacting point processes -- either in discrete or continuous time. Such a system is defined by two elements, the first one is a directed graph having the neurons as vertices with an edge going from neuron $i$ to neuron $j$ if $i$ is \emph{presynaptic} to $j$. The second element is a set of functions indexed by the neurons mapping the past history of the system into the set of positive real numbers. In a continous time description, these functions tell at each time $t$ the spiking rate of each neuron, while in discrete time, they give the probability of spiking of each neuron at each time step. Several papers have  recently adopted this framework \cite{TruccoloEtAl_2005,OkatanEtAl_2005,Reynaud-Bouret+:2014}.

In alignment with these studies, we consider the class of models  introduced by Galves and L\"ocherbach in \cite{GalEva:13} -- and further studied in  \cite{Errico:14,Evafou:14,AG:14,Duarte_Ost:15,PierreEva:14, RobTou:14, Karina:15} -- which is built upon the simple assumption that the variable representing the membrane potential of each neuron is reset every time it spikes, which is biologically plausible.

In this class of models, the probability of a neuron to spike increases monotonically with its membrane potential, which evolves as follows. Whenever $i$ fires, its membrane potential is reset to a resting value. Moreover, when a presynaptic neuron $j$ fires, the membrane potential of neuron $i$ gains $W_{j\rightarrow i}$. At each time step, the neuron is subject to a leakage effect that drives the membrane potential to a resting state. Therefore, at a particular instant, the membrane potential of a given neuron is a function of the spiking history of all neurons presynaptic to it since the last time $i$ spiked.

Assuming the model description above, we study the statistical selection procedure of influence graph estimation proposed by \cite{dglo} that can be described as follows. For each neuron $i$ in the sample, we consider all other neurons as candidates to \textit{presynaptic}. We then estimate the probability of $i$ to spike given the spiking history of all other neurons since $i$ last spiked. For each neuron $j\neq i ,$ we then use a measure of sensitivity to determine if the conditional spiking probability is affected by changes in $j$ spiking activity. If this sensitivity measure is statistically small, we exclude candidate neuron $j$ from the set of neurons presynaptic to $i$.

The estimator we propose here contrasts with many previous studies that focus on pairwise statistical relationships between spike trains (\cite{Aertsen1989, Moore70, Gerstein1969, PerkelEtAl_1967b, Brillinger_1992, BrillingerEtAl_1976}).  Although these methods are still  widely used among neurobiologists for providing useful estimations of functional connectivity, they do not allow for discrimination of direct and indirect effect of a neuron activity over another. A more sophisticated related concept is that of ``effective connectivity" introduced by \cite{Aertsen1991} relating functional connectivity to a model of causal relation between neurons in a network  \cite{Horwitz2003}. The effective connectivity of a network is the simplest circuit that explains the temporal relationships between recorded neurons given a model of neural activity \cite{Friston1994},\cite{Friston2011}. The graph estimator considered in this work  provides more than just the functional connectivities, but infers, by construction, the effective connectivity of the network.  The consistency of the procedure is proved in \cite{dglo} and we verify that it recovers the real connectivity matrix for simulated data.

Furthermore, we address the sensitive and often avoided  issue of the influence of the unobserved neurons over the observed neuronal ensemble. Previous works typically ignore the existence of foreign input to the observed network or, as Duarte et al. \cite{dglo} build their model over the assumption (Assumption 6 of \cite{dglo} ) that all unobserved presynaptic neurons exert a negligible effect over the observed neurons. 

Unfortunately this assumption is not reasonable for the vast majority of experimental data in Neuroscience, especially when acquired through the multi-electrode array technique \textit{in vivo}, where the number of recorded neurons -- even when it is in the order of hundreds -- is typically many orders of magnitude smaller than the number of neurons presynaptic to the set of recorded neurons. This matter is particularly relevant to experimental data we analyse. Specifically, our data consists of multi-electrode array recordings of the olfactory pathway of the locust \emph{Schistocerca americana} -- an animal model for the study of chemosensory information processing. In the antennal lobe of this animal, which is the first olfactory relay, there is a cell type that fires very low amplitude action potentials that cannot be recorded even in close proximity to the electrode. The present work deals with this matter providing some theoretical results concerning the estimator for partially observed neuronal networks.

The Paper is organized as follows. In section \ref{sec:expmeth} we briefly describe experimental methodology and data.  In section \ref{sec:model}, we describe the model, the estimation procedure and propose a pruning algorithm in order to deal with the situation where the sample size is insufficient in the case where all neurons are observed. In section \ref{sec:simulations} we study the properties of the estimation procedures on simulations. We also detail the particular case of partial observation of the network. In section \ref{sec:truedata} we apply the procedures on a dataset recorded of the locust antennal lobe.  All theoretical results are presented in the Appendix \ref{app:maths} .

\section{Experimental  Methodology}\label{sec:expmeth}

The data--\emph{spike trains}--used in this manuscript were obtained
from the first olfactory relay, \emph{the antennal lobe}, of locusts,
\emph{Schistocerca americana}.  The antennal lobe contains two different cell types: $\approx$ 830 \emph{projection neurons} that are excitatory (cholinergic) and fire sodium dependent, propagated, action potentials (that is, classical action potentials) and also $\approx$ 300 \emph{local neurons} that are inhibitory (gabaergic) and fire calcium dependent, \emph{local}, action potentials that cannot be detected extracellularly. Therefore, extracellular recordings used to get the data \emph{do not catch the calcium spikes of the local neuron; a single cell type is seen: the projection neuron} \cite{Laurent_1996}. The recording method is described in
Pouzat et al (2002) \cite{pouzat2002}. 

 Briefly, extracellular recordings were performed \emph{in vivo} using a multi-electrode array (Now  commercialized by {\texttt{Neuronexus}}). Data were
acquired continuously (4 to 16 recording channels sampled at 15 kHz and
filtered with a band-pass filter between 300 and 5000 Hz) and stored on
a hard drive for subsequent analysis. These data are now publicly and
freely available from \texttt{Zenodo} (\url{https://zenodo.org/record/21589\# \kern-1ex .WQoUCx1Jlz9}) \cite{pouzat2015} where a description of each
experiment can also be found. The data are stored in  HDF5
format and the one or two files belonging to each experiment contain as
metadata the transcript of the experimentalists' lab book (\url{https://support.hdfgroup.org/HDF5/}).

The spike trains analyzed in the present manuscript were obtained from the ``raw
data'' found on zenodo after a procedure called {\texttt{Spike sorting}. Briefly, this procedure consists in detecting action potential or \emph{spike} from the raw data, determining how many different neurons are
contributing to the recording (how many different spike shapes are
observed) before classifying each detected event as belonging to one of
the active neurons. This procedure is not neutral, therefore the
analysis leading to the spike trains used in this manuscript is
described step by step on a dedicated GitHub
repository (\url{https://christophe-pouzat.github.io/zenodo-locust-datasets-analysis/}). On this repository the reader can find for each experiment
and for each tetrode (group of four recording electrodes) the full
script with comments and figures describing the sorting procedure (implemented in \texttt{R}) and allowing them to judge the quality of the obtained spike trains. Here we use only spike trains from well isolated neurons.

These so called spike trains, can be transformed into a symbolic sequence that represents the process in discrete time: the total duration of the recording is divided into small windows (with the order of milliseconds) and a symbol $1$ is attributed to each window that corresponds to the occurence of a spike and, conversely, a symbol $0$ is attributed to the absence of a spike in a window. Once this representation of neural behavior is obtained from data, it is possible to perform statistical model selection in the class of models proposed by Galves and Locherbach (\cite{GalEva:13}) by applying the interaction graph estimator procedure proposed in \cite{dglo}.

\section{Statistical model selection}\label{sec:model}

\subsection{A stochastic model for a network of interacting neurons}
\label{sec:GL}

In what follows,  $I$  denotes a finite set and $W_{j \to i}\in \R$ with $i,j\in I$, a collection of real numbers such that $W_{j \to j } = 0 $ for all $j.$ Moreover, for each $i\in I$, $\varphi_i : \R  \to [ 0, 1 ]$ is a  non-decreasing measurable function and $g_i=(g_i(t))_{t\geq 0}$ is a sequence of strictly positive real numbers. 
In neuroscience terms we call: $I$ the set of neurons; $W_{j \to i } $ the {\it synaptic weight of neuron $j$ on neuron $i,$}; $\varphi_i $ the  {\it spike rate function} of neuron $i$;  $g_i$ the {\it leakage function} of neuron i.

We shall consider a time-homogeneous stochastic chain $(X_t )_{ t \in \Z }$ taking values in $ \{ 0, 1 \}^I ,$ defined on a suitable probability space $ ( \Omega , \AA , \mathbb{P} )$. The configuration of the network at $t \in \Z$ is given by $X_t=(X_t(i),i\in I)$; for each $i\in I$,  $ X_t (i) = 1 $ if neuron $i$ spikes at time $t$ and $X_t(i) = 0 $ otherwise.

For notational convenience, we shall write $X_s^t(A)=(X_h(i),s\leq h\leq t,i\in A)$ for any set $A\subset I$ and indexes $-\infty\leq s\leq t< \infty$. When $A=I$ we shall write simply $X_s^t$ rather than $X_s^t(I)$. 
For each $i \in I$ and $t \in \Z $, let
\begin{equation}
\label{def:0}
L_t^i = \sup \{ s\leq t  : X_s (i) = 1  \}, 
\end{equation} 
be the last spike time of neuron $i$ before time $t.$ Here, we adopt the convention that  $\sup \{ \emptyset  \}=-\infty.$

The stochastic chain $(X_t )_{ t \in \Z }$ has the following dynamics. For each $t \in \Z $ and $ a_i \in \{ 0, 1 \} ,i\in I,$ it holds $\P-$almost surely that
\begin{equation}
\label{def:1}
\P(X_{t+1}(i) =a_i,i\in I|X^{t}_{-\infty})=\prod_{i\in I}\P(X_{t+1}(i)=a_i |X^{t}_{-\infty}). 
\end{equation}
The identity \eqref{def:1} means that, conditionally on the past history $X^{t-1}_{-\infty}$, the random variables $X_t(i),i\in I,$ are independent. 
Moreover, we assume that $\P-$almost surely for each $i\in I$ and $t\in \Z$, 
\begin{equation}
\label{def:2}
\P(X_{t+1}(i)=1|X^{t}_{-\infty})= \varphi_i\Big( \sum_{j\in I}W_{j\to i}\sum_{s=L_{t}^i+1}^{t} g_j(t-s)X_s(j)\Big)
\end{equation}
whenever $L_{t}^i<t$ and 
$$
\P(X_{t+1}(i)=1|X^{t}_{-\infty})= \varphi_i (0 ),
$$
otherwise. 

For each  $i\in I$, let 
\begin{equation}
\label{def:int_neigh_of_i}
V_i=\{j\in I\setminus\{i\}: W_{j\to i}\neq 0\}
\end{equation} 
be the set of presynaptic neurons of neuron $i$; it will be called the {\it interaction neighborhood} of neuron $i$.
The identity \eqref{def:2} means that, on the event $\{L^i_t< t\}$, 
the probability of neuron $i$ to spike at time $t+1$ depends on the past history $X^{t}_{-\infty}$ only through $X^{t}_{L^i_{t}+1}(V_i).$ 
Moreover, immediately after spiking, neuron $i$ will spike (or not) according a Bernoulli random variable with parameter $\varphi_i(0)$ regardless the past history.
Notice that the random variables $L^i_t$ introduce a  structure of variable-length memory in the model. 

We define the set  $\Omega^{adm} $  of \textit{admissible pasts} as follows  
\begin{equation}
\label{def:adm_set}
\Omega^{adm}=\left\{x  \in \{0,1\}^{I\times \{\ldots,-1,0\}}: \forall \ i\in I, \ \exists \ \ell_i\leq 0 \ \mbox{with} \  x_{\ell_i}(i)=1 \right\} .
\end{equation}
Notice that when $X_{-\infty}^{0}=x\in \Omega^{adm} $, we have $L^i_0>-\infty$ for all $i\in I$. As a consequence, for each $i\in I$,
$$
\sum_{j\in I}W_{j\to i}\sum_{s=L_{0}^i+1}^{0} g_j(-s)X_s(j)<\infty, 
$$ 
implying that the transition probability $\P(X_{1}(i)=1|X^{0}_{-\infty}=x)$ is well-defined. By induction, for each $t\geq 1$, the transition probabilities \eqref{def:2} are also well-fined. Therefore, the existence of the stochastic chain $(X_t)_{t \in \Z }$, starting from  $X_{-\infty}^{0}=x\in \Omega^{adm} $, follows immediately.
Notice that  we do not assume stationarity of the chain.

\subsection{Statistical inference.}
\label{Subsec:statis_inference}
The aim of this section is to present the statistical procedure proposed in \cite{dglo} to estimate, for a given neuron $i \in I$, the interaction neighbourhood $V_i$ from the data. 
In what follows it will be assumed that $V_i \neq \emptyset$ for all $i \in I.$ 
To present the estimation procedure we shall introduce further notation.

For any subset $A\subseteq I$ and integer $\ell\geq 1$, we define
$
C^{A,\ell}= \{ 0, 1\}^{ \{ - \ell , \ldots , -1\} \times A}.
$ 
Elements of $C^{A,\ell}$ are called local pasts.
For any integers $\ell\geq 1$ and $t \in \Z, $ subset $A\subseteq I$ and local past  $w\in C^{A,\ell}$, we shall write $ X^{t-1}_{t- \ell } ( A) = w $, if $ X_{t-s } (j ) = w_{-s} (j ) , $ for all $  1 \le s \le \ell $ and for all $ j \in A.$ 

Let $X_1(F),\ldots,X_n(F)$, with $F \subset I$, be a sample produced by a version of the stochastic model $(X_t)_{t \in \Z}$ compatible with \eqref{def:1} and \eqref{def:2}. Here, $F$ is the set of neurons whose spike activity has been recorded and $n$ is the length of the time interval during which this set of neurons has been observed. 

The estimation procedure is defined as follows.
For each integer $n\geq 3$, neuron $i\in F,$ local past $w\in  C^{F\setminus\{i\},\ell}$ outside of $i$  with $1\leq \ell\leq n-2$ and symbol $a\in\{0,1\}$, we define 

\begin{equation*}
N_{(i,n)}(w,a)=
\sum_{t=\ell+2}^{n}{\one}{\{ X_{t-\ell-1}^{t-1}(i)=10^{\ell},X_{t-\ell}^{t-1}(F \setminus\{i\})=w,X_t(i)=a\}}.
\end{equation*}
The random variable $N_{(i,n)}(w,a)$ counts the number of occurrences of $w$ followed or not by a spike of neuron $i$ ($a=1$ or $a=0,$ respectively) in the sample $X_1 (F) ,\ldots, X_n (F) ,$ when the last spike of neuron $i$ has occurred $\ell + 1$ time steps before in the past. 

For a fixed local past $w\in  C^{F\setminus\{i\},\ell}$ outside $i\in F$, we define then the empirical probability of neuron $i$ having a spike at the next time step given $w$ by
\begin{equation}
\label{def:trans.prob.emp}
\hat{p}_{(i,n)}(1|w)=\frac{N_{(i,n)}(w,1)}{ N_{(i,n)}(w)},
\end{equation}
when $N_{(i,n)}(w)=N_{(i,n)}(w, 0 ) +N_{(i,n)}(w, 1 )>0.$ 

For any fixed parameter $\xi\in (0,1/2)$, we consider the following set
\begin{equation}\label{eq:xi}
\mathcal{T}_{(i,n)}= \Big\{w\in  \bigcup_{\ell =1}^{n-2}  C^{F\setminus\{i\},\ell} :N_{(i,n)}(w)\geq n^{1/2+\xi} \ \Big\}.
\end{equation} 

We write $|w| = \ell$ whenever $w\in C^{F\setminus\{i\},\ell}. $
If $ v, w $ both belong to $C^{F\setminus\{i\},\ell} $  we write 
$$ 
v_{ \{j \}^c }   = w_{ \{ j \}^c }     \mbox{ if  and only if }  v_{-\ell }^{-1}\big(F\setminus\{j\}\big)=
w_{-\ell }^{-1}\big(F\setminus\{j\}\big)  .
$$
In words, the equality $v_{ \{j \}^c }   = w_{ \{ j \}^c }$ means that  $v$ and $w$ coincide on all but the $j$-th coordinate.

Finally, for each $w\in\mathcal{T}_{(i,n)}$ and for any $ j \in F \setminus \{i\} $ we define the set
\begin{equation*}
\mathcal{T}^{w,j}_{(i,n)}= \Big\{v \in \mathcal{T}_{(i,n)}: |v|=|w|,  v_{ \{j \}^c }   = w_{ \{ j \}^c }  \Big\}
\end{equation*}
and introduce the \textit{measure of sensitivity}
\begin{equation}\label{eq:delta}
\Delta_{(i,n)}(j)=\max_{w\in \mathcal{T}_{(i,n)}}\max_{v \in \mathcal{T}^{w,j}_{(i,n)}}|\hat{p}_{(i,n)}(1|w)-\hat{p}_{(i,n)}(1|v )|.
\end{equation}

For any positive threshold parameter $\varepsilon >0,$ the estimated interaction neighborhood of neuron $i \in F,$ at accuracy $\varepsilon,$ is defined as

\begin{equation}\label{eq:defestim}
\hat V_{(i,n)}^{(\varepsilon)} = \left\{ j \in F \setminus \{i\} : \Delta_{(i,n)}(j) > \varepsilon \right\}.
\end{equation}
In what follows, whenever $j\in \hat V_{(i,n)}^{(\varepsilon)}$,    we write $j\to i$ to denote the connection from $j$ to $i$. 
When $V_i\subset F$ and some other assumptions on the model hold, the estimated interaction neighborhood $\hat V_{(i,n)}^{(\varepsilon)}$ is shown to be {\it strong consistent}, as shows Theorem \ref{prop:noprune}. By strong
consistency we mean that $\hat V_{(i,n)}^{(\varepsilon)}$ equals to $V_i$ eventually almost surely as $n\to\infty$. 
For cases in which we do not necessarily assume that $V_i\subset F$, the strong consistency of $\hat V_{(i,n)}^{(\varepsilon)}$ can no longer be guaranteed (see Section \ref{sec:simulations} for details). In these cases, the estimated interaction neighborhood is only able to exclude, from the observed set $F$, neurons which are functionally independent of the neuron $i$, as stated in Proposition \ref{prop:partial_obs}
 

\begin{defin}
A pair of neurons $i,j\in I$ with $i\neq j$ are said to be functionally independent if there is no path of neurons $k_1,\ldots, k_m\in I$ with $i=k_1$ and $j=k_m$ such that for all $1\leq p\leq m-1$, either $W_{p\to p+1}\neq 0$ or $W_{p+1\to p}\neq 0$. 
\end{defin}

The theoretical results concerning the estimation procedure, including Theorem \ref{prop:noprune} and Proposition \ref{prop:partial_obs}, will be discussed in Appendix   \ref{sec:consitency}.

\subsection{Iterative pruning procedure}\label{subsec:pruning}

Since the estimator is well defined only on events of the type $ \bigcap_{j\in F: j\neq i} E_{i,j}^n,$ where $E_{i,j}^n := \left\{\exists w \in \mathcal{T}_{(i,n)}: \mathcal{T}_{(i,n)}^{w,j} \neq \emptyset \right\},$ we propose an iterative pruning procedure to deal with cases where this event is not realized.  For neuron $j\in F\setminus\{i\}$ for which $E_{i,j}^n$ is not realized the connection $j\to i$ will be called inconclusive. This may occur when the sample size is small.  When $\mathcal{T}_{(i,n)} = \emptyset$, all connections leading to neuron $i$ are considered inconclusive. In the case where $\mathcal{T}_{(i,n)} \neq \emptyset$, a connection $j\to i$ is considered inconclusive if $\mathcal{T}_{(i,n)}^{w,j} = \emptyset $ for all $w \in \mathcal{T}_{(i,n)}$ .

The pruning procedure is described as follows.  If there exist $j \in F\setminus\{i\} $ such that $E_{i,j}^n$ is not realized and $k \in F \setminus \{i,j\}$ such that $E_{i,k}^n$ is realized and $k \notin \hat{V}^{(\epsilon)}_{(i,n)}$, we say that the pruning condition is fulfilled. If so, we compute $\hat{V}^{(\epsilon)}_{i,n}$ considering the set $F \setminus \{i,k\}$ instead of $F\setminus\{i\}.$ This step is repeated as long as the pruning condition is fulfilled.
As we will show in the Appendix \ref{sec:pruningproof}, this iterative pruning procedure conserves the consistency of the estimation \ref{eq:defestim}. Its formal definition is given now as a pseudo-code.

\begin{algorithm}[H]
\caption{ Iterative pruning procedure to estimate $V_i$}
\small
\begin{algorithmic}[1]
\State {\bf Input:} The parameters $\xi$ and $\epsilon$, the observable set $F$ and the sample $X_1(F),\ldots, X_n(F)$ .
\State {\bf Output:} The selected set $\hat{V}_i$ for each $i\in F$.
\State \textit{Initial values:} $G_i\leftarrow F\setminus\{i\}$;\\
$ \mathcal{T}_i \leftarrow \{w \in \cup_{\ell=1}^{n-2} \{0,1\}^{\{-\ell,\ldots,-1\}\times G_i} : N_i(w) \geq n^{1/2+\xi} \}$; $ \mathcal{I}_i \leftarrow \emptyset$; $\hat{V}_i \leftarrow \emptyset $ 
\IF {$\mathcal{T}_i =\emptyset$}
\State $I_i\leftarrow G_i$ and $\hat{V_i}\gets \emptyset$
\ELSE 
\FOR {$j\in G_i $ }	
\FOR { $w\in \mathcal{T}_i $ } 
\State Compute $\mathcal{T}^{w,j}_i$
\IF {$\mathcal{T}^{w,j}_i\neq \emptyset$} {$M^{w,j}_i\leftarrow \max_{s\in \mathcal{T}^{w,j}_i\ }|\hat{p}_i(1|w)-\hat{p}_i(1|s)|$}
\ENDIF
\IF {$M^{w,j}_i>\epsilon$} 
\STATE $\hat{V}_i \leftarrow \hat{V}_i\cup \{j\}$ 
\ENDIF				
\ENDFOR
\IF {$\cup_w \mathcal{T}^{w,j}_i=\emptyset$} 
\State $I_i\leftarrow I_i\cup\{j\}$
\ENDIF			

\ENDFOR
\IF{$G_i \neq \hat{V_i}\cup I_i$}		
\STATE Choose $k \in G_i \setminus \left( \hat{V_i}\cup I_i \right).$		
\STATE $G_i\leftarrow G_i \setminus \{k\}$	 and return to step 4.		
\ENDIF
\ENDIF

\State {\bf Return:} $\hat{V}_i$ and $I_i$.
\end{algorithmic}
\end{algorithm}

\section{Results on simulations}\label{sec:simulations}
Simulation and graph estimation procedures used to produce the results presented in this section were implemented in Python 3.0 and are publicly available online at \url{https://github.com/lbrochini/Graph-Estimation}.

\subsection{Searching for suitable $\varepsilon$ and $\xi$ parameter values}

In this section, we use simulated data in order to fix the parameters $\xi$ and $\varepsilon$ involved in the estimation procedure. Recall that $\xi$ is the parameter appearing in the definition of the set $\mathcal{T}_{(i,n)}$ in \eqref{eq:xi} and that $\varepsilon$ appears in the definition of $\hat V_{(i,n)}^{(\varepsilon)}$ in \eqref{eq:defestim}. The role of $\xi$ is to ensure that the observations contains enough repetitions of a given local past $w$ in order to define the empirical probability $\hat{p}_{(i,n)}(1|w).$ The parameter $\varepsilon$ can be seen as a significance threshold for the measure of sensitivity $\Delta_{(i,n)}(j).$

The simulated samples have a similar sample size to the experimental dataset analysed in section (\ref{sec:truedata}), i.e. $n=10^6$ and with the same number of neurons: 5.  The neural activity is simulated according to the dynamics described in  Eqs.  \ref{def:1} and \ref{def:2}. Synaptic weights $W_{j\rightarrow i}$ were arbitrarily distributed from 0 to 0.8 in this network for all possible pairs $(i,j)$. We use a geometric leakage $g_i(n)=\mu^n$ with parameter $\mu=0.5$ for any neuron $i$. We use a simple linear saturating firing function $\varphi_i(u)=\min(u+q_i,1)$, where $q_i=0.02$ for any neuron $i$.  A simulation study of this model under geometric leakage was previously done in \cite{Brochinietal:16}.

In figure \ref{fig:simxieps}, we give the results of the estimation procedure for different values of the parameters $\xi$ and $\varepsilon.$  For each couple $(\xi , \varepsilon)$ we present the result in a $5 \times 5$ matrix. For each line $j$ and column $i,$ the color of the square indicates the presence or absence of influence of neuron $j$ on neuron $i$ and the result of the estimation procedure. The color code is the following. Correct estimations are represented in black and white: black if $W_{j \to i} \neq 0$ and white if $W_{j \to i} = 0$. Incorrect estimations are represented in hatched cells. Hatched white cells correspond to false negatives, when $W_{j \to i} \neq 0$ but the estimator produced an absent connection. Hatched grey cells, on the other hand, indicate false positives, when $W_{j \to i} = 0$ but a connection was estimated to exist. Plain grey cells correspond to inconclusive results, a situation when the event  $E_{i,j}^n$ is not realized, where  $E_{i,j}^n := \left\{\exists w \in \mathcal{T}_{(i,n)}: \mathcal{T}_{(i,n)}^{w,j} \neq \emptyset \right\}$. This may happen due to the fact that the sample becomes relatively small as the cutoff parameter $\xi$ increases, in which case the procedure will produce a smaller number of valid events to be considered by the estimator.

As expected, low values of the sensitivity threshold $\varepsilon$ lead to more false positive whereas high values lead to more false negative. For this sample size, the estimation procedure correctly recovers the true connectivity graph for $\varepsilon=0.05$ and $\xi=0.001$ or $0.01.$

\subsection{Pruning }

The pruning procedure proposed in section \ref{subsec:pruning} can be applied in cases where there are inconclusive results. It consists in removing connections estimated as absent and obtaining a new estimation for the reduced subset.

Inconclusives can be typically attributed to small sample sizes and/or data sparsity. Evidently, if we increase 
the number of neurons or decrease sample size while maintaining the same parameter values of $\epsilon$ 
and $\xi$, we expect a larger number of inconclusive connections. This is precisely what we did to illustrate the utility of the pruning procedure:  we generated a sample of  GL network activity with 10 neurons and sample size of $n=\num{2e5}$, which is a larger number of neurons and smaller sample size as used in the previous section. All synapses have the same weight ($W=0.5$) and leakage and spontaneous activity parameters are set to $\mu=0.9$ and $q=0.06$. In the analysis we used parameter values  $\epsilon=0.05$ and $\xi=0.001$, determined in the previous section. 

In fact, the first estimation obtained prior to any pruning (shown in Fig. \ref{fig:pruning} A) produced a remarkably large number of inconclusives (grey cells).

After the first estimation, the pruning procedure is used to help reduce the amount of inconclusives. For each postsynaptic neuron $i$, all neurons which are identified by the estimator as not preysynaptic to $i$ are removed from the set of presynaptic candidates. Then the graph estimating procedure is repeated. The pruning and re-estimation is repeated while there are at least one inconclusive and one connection identified as null for the postsynaptic neuron $i$.

After the pruning procedure is performed for all postsynaptic neurons, we observe a dramatic improvement in the quality of the graph estimation (Fig. \ref{fig:pruning} B).  The final estimation correctly identifies all existing connections for this network. The effectiveness of the pruning procedure is due to the reduction in the number of presynaptic candidate neurons while maintaining the same sample size, leading to the improvement of the estimation quality.

\subsection{Connectivity graph estimation of observed neurons in the presence of unobserved synaptic paths: an empirical investigation}

Our results presented in the Appendix  \ref{App:partial_obs} guarantee that if there is no path between neuron j and i involving an unobserved neuron, then the estimator is not expected to produce a connection. What happens when this condition is not met? Evidently, it is possible for a specific pair of neurons to be disconnected and yet influence one another indirectly through a set of unobserved neurons to which they are both connected, configuring a path. In this case, the  estimator can produce a connection that in fact is a \textit{projected} connection of that path. 


As the estimator will not  consistently produce the true connectivity graph in this situation,  it is desirable to empirically investigate if  the graph estimation  is able to provide useful information about the true graph in the particular case where there is a \textit{known} path involving unobserved neurons.  In order to do so, we generate a sample of a simulated network of $N$ neurons where the true graph is known beforehand. Then, we apply the graph estimation for a subset of these neurons, representing the  observed neurons. Once the estimation is obtained, we compare the estimated connections of the subgraph of observed neurons with the complete true graph. 

Figure ~\ref{ProjMap} A depicts the schematic connections of a simple network of GL neurons with 10 neurons, with 7 synapses total, all excitatory with the same weight.
 A sample of size $\num{2e5}$ was generated with the dynamics introduced in Sec.~\ref{sec:GL}. We then computed the estimated graph for subsets of the network with size 3 and 4. All subgraph estimations for all possible subsets produced the following results:
 
\begin{itemize}
\item All connections identified as false were indeed false
\item All connections identified as true were either indeed true or a \textit {false positive due to a projected connection} 
\end{itemize}
 
This empirical study indicates that the graph estimation procedure can be applied to real \textit{in vivo} recordings of multiunit spike trains, as the connections can be attributed either to true connections due to an existent synapse or projected connections from a synaptic path of unobserved neurons.  
 
Now we  propose a procedure that, without assuming prior knowledge about the true connectivity, would allow us to obtain a connectivity estimation for the whole graph based on the estimations for all possible subgraphs. We propose to use subsets of size 3, that provides the smallest number of combinations and sample size required for the analysis.

The procedure is as follows. After computing the graph estimation for all subsets of size 3,  we compare all estimation outcomes for a single pair $(j,i)$ to verify whether a connection from $j$ to $i$ is identified in all, none or some of the subgraph estimations for all $N-2$ subsets that contain this specific pair.  If the connection is identified as absent for all subsets, then we consider it absent for the complete set of observed neurons. Conversely, if the estimator produces a connection $j \rightarrow i$ for all subsets, then we estimate that there is a direct connection  $j \rightarrow i$. Alternatively, if the estimator produces a connection $j \rightarrow i$ for some subsets but also identifies it as an absent connection for other subsets, then we consider that there is no direct connection.

Figure \ref{ProjMap} B depicts the results of this procedure applied to the aforementioned sample. In the figure, the cell at line $j$ and column $i$ describes the nature of influence that neuron $j$ has over neuron $i$. Cells depicted in black indicate a connection detected for each subset containing the concerned pair of neurons. White cells indicate  no subset containing the concerned pair of neurons produced a connection. Hatched cells represent pairs of neurons that produced a connection for some subsets but had an absent connection for some other subsets. 

We can compare figures  \ref{ProjMap} A and B to check if the recovered whole graph corresponds to true connections.  Indeed, black and white cells correspond to true and absent connections respectively, while hatched cells correspond to disconnected pairs in the true graph induced by a projection (dashed lines in fig \ref{ProjMap} A) due to a path with one or more other neurons.

By virtue of this result, we believe this procedure can be a good strategy to be applied to real data. The typical number of recorded neurons in an experiment has been doubling every seven years because of continuous advances in multiunit recording techniques \cite{StevensonKording2011}. This introduces a limitation for estimating the connectivity graph for all neurons simultaneously. Indeed, if the number of neurons grows too large with respect to the sample size the experimenter is able to obtain, the estimator should produce more and more inconclusive connections. Therefore, we propose the  use the procedure presented in this section to obtain a connectivity graph estimation from data recorded from large amounts of neurons.

\section{Results on a dataset recorded in vivo}\label{sec:truedata}

Here we present results of the estimated influence graph for a particular dataset that corresponds to a recording of about half an hour of spontaneous neural activity. Spike sorting procedure for this dataset can be found here : \url{https://christophe-pouzat.github.io/zenodo-locust-datasets-analysis/Locust_Analysis_with_R/locust20010217/Sorting_20010217_tetD.html}. Through this procedure we obtain spike trains of 5 well isolated neurons, each neuron presenting the order of $10^4$ total spikes in the sample.

In order to use the estimation procedure, we need to obtain a representation of the spike train in discrete time. We choose the largest binning window which produces less than $1\%$ of overlaps. By overlap we mean when two or more spike events of the same neuron occur in the same time window. This leads to a binning window of about 10 milliseconds.

We fix for $\xi$ and $\varepsilon$ the values that fitted the simulations, i.e. $\xi=0.001$ and $\varepsilon=0.05.$ We present in figure \ref{fig:truedata}A the result of the estimation procedure. The color code is the following: black  indicates we estimated that there is a connection $i\leftarrow j$, white indicates we estimated that there is no connection $i\leftarrow j$ and grey corresponds to an inconclusive.

Unfortunately the results are mostly inconclusives for  neurons 4 and 5, even with the pruning procedure described in section \ref{subsec:pruning}.

In order to validate this estimation procedure, we split the dataset in two parts and proceed to the estimation for each part. The results are given in figures \ref{fig:truedata} B and \ref{fig:truedata} C.

We can see that the estimation procedure gives us the same graph for the two different parts of the dataset, except for pairs 4 $\leftarrow$ 5 and 5 $\leftarrow$ 4 where we have inconclusive results when data is split.  As was already mentioned, the expected number of inconclusives should be very sensitive to sample size, so it is not surprising that two connections considered absent when the whole data is analyzed appear as inconclusives when sample size is reduced by half. Having this considered, we can conclude that there is an overall agreement between the graphs obtained, and say that the estimation obtained  is robust to data splitting for this dataset of projection neurons in spontaneous activity.

\section{Discussion}


This work presents a novel method to estimate the connectivity graph of a network of neurons given their spike train recordings by providing theoretical, empirical and applied results, based on a previous work by Duarte et al \cite{dglo}. 
Although the graph estimation procedure here proposed is not agnostic about the model, our core assumption is very simple and biologically plausible: that the neuron forgets all the process past to a spike.

In the discrete time framework, this reset assumption makes the model a system of stochastic chains with memory of variable length. This generalizes the original notion of context tree introduced by Riessanen \cite{Rissanen:1983}.  With the specific definition introduced in \cite{GalEva:13} this model can be seen as a system of leaky integrate and fire neurons with random threshold \cite{Brochinietal:16}. This assumption is also mathematically challenging as it makes the model non-continuous and without monotonicity properties.

This model has been intensively studied in past few years 
including Duarte et al. \cite{dglo} that performed a first study on graph estimation on this model. This previous work does not address, however, important and novel issues dealt in the present work, such as the influence of unobserved stimuli, assuming that, if they exist,  exert negligible effect over the observed neurons.

The issue of the ubiquitous presence of unobserved presynaptic stimuli, although often neglected, is pertinent to most problems of inference of neurons interrelationships from their recorded spike trains.  In  \cite{brill88}, Brillinger preforms an empirical study of the integrate-and-fire model with random threshold and, although the model addresses the question of interconnections in moderate sized networks providing biologicallly interpretable parameters, he does not consider the effect of hidden presynaptic activity.

There is an even more basic issue regarding what is considered connectivity among observed neurons. Traditionally, recorded neurons are considered to interact if they exhibit correlated firing patterns, typically inferred through  pairwise cross-correlograms \cite{PerkelEtAl_1967b} or joint peri-stimulus time histograms \cite{Gerstein1969}. A pair of neurons  with correlated activity may indicate the presence of a synaptic connection between them or, alternatively, an indirect synaptic pathway or a common drive from other unobserved neurons in the system. 

As it is not possible through neural activity recording techniques to positively determine if two neurons are synaptically connected, neurons with correlated activity are simply said to be functionally connected. A complementary measure of connectivity is that of \textit{effective connectivity} \cite{Aertsen1991,Friston1994,Friston2011}. Okatan and others \cite{OkatanEtAl_2005} with an interesting approach, present a similar concept as what they call optimal connectivity matrix.  As a matter of fact, our method provides an estimation of effective connectivity of the observed neural ensemble.

 Here we apply the estimator to the analysis of neural activity data recorded from the first olfactory relay\textemdash{}the antennal lobe\textemdash{}of a locust, \textit{Schistocerca americana}.  \textit{In vivo} antennal lobe recordings in insects consist an important paradigm to the study of olfaction, especially because olfactory systems anatomy  is well conserved across many species. Studies in locust, for instance,  have provided important insights into how odors are represented in the brain \cite{Laurent_1996}. Experimental protocols typically involve recording spontaneous  neural activity as well as stimulus-driven activity in response to the repeated  presentation of odors.

We believe the influence graph estimator here studied could be broadly used in the analysis of multiunit recordings to infer effective connectivity of recorded neurons, allowing for comparisons of the estimated influence graph when neurons are in spontaneous firing mode or responding to stimuli.

\section{Software}
\label{sec5}
All software and data pertinent to this work are publicly available online. Raw data from multielectrode array recordings from the locust \textit{Schistocerca americana} are available at zenodo (\url{https://zenodo.org/record/21589\# \kern-1ex .WQoUCx1Jlz9}, \cite{pouzat2015}) in hdf5 format. A thorough description of the spike sorting procedure implemented in \texttt{R} and a step by step analysis can be found at \url{https://christophe-pouzat.github.io/zenodo-locust-datasets-analysis/Locust_Analysis_with_R/locust20010217/Sorting_20010217_tetD.html}. All simulations and analysis routines were implemented in Python 3.0 and are available at \url{https://github.com/lbrochini/Graph-Estimation} along with the processed data and a guide for reproducing the figures presented in this manuscript.

\section*{Acknowledgments}

This work was produced as part of the activities of S\~ao Paulo Research Foundation (FAPESP) Research, Innovation and Dissemination Center for Neuromathematics (grant no.  2013/ 07699-0). L.B. also received Conselho Nacional de Desenvolvimento Cientfico e Tecnolgico (CNPq) support (grant no.  165828/2015-3) and FAPESP support (grant no. 2016/24676-1) .  P.H. received FAPESP support (grant no.  2016/17655-8). G.O. received FAPESP support grant no. 2016/17789-4.

\begin{appendix}

\section{Theoretical results}\label{app:maths}

\subsection{Consitency of the estimation}\label{sec:consitency}

To prove the consistency of our estimator we impose the following.
\begin{assumption}\label{ass:4}
For all $ i \in I, $  $\varphi_i \in C^1 ( \R, [0, 1]  ) $ is a strictly increasing function. Moreover, there exists a $p_{* } \in ]0, 1 [ $ such that for all $i\in I$ and $u\in \R$
$$
p_* \leq \varphi_i(u)\leq 1-p_{*}.
$$  
\end{assumption}

For each $i\in I$, write 
\begin{equation}
\label{def:Kil}
K_{i}=\left[\ \sum_{j\in V^{-}_i}W_{j\to i}g_j(0),\sum_{j\in V^{+}_i}W_{j\to i}g_j(0)\right],
\end{equation}
where $V^{+}_i=\{j\in V_i: W_{j\to i}>0\}$ and $V^{-}_i=\{j\in V_i: W_{j\to i}<0\}$ and define
\begin{equation}\label{eq:mi}
m_i=\inf_{u\in K_{i}}\left\{\varphi'_i(u)\right\}\inf_{j\in V_i}\left\{|W_{j\to i}|g_j(0)\right\}.
\end{equation}

The consistency theorem is as follows. 
\begin{theo}[Theorem 1 of \cite{dglo}]\label{prop:noprune}
Let $X_1 (F) ,\ldots, X_n(F)$ be a sample produced by a stochastic chain $(X_t)_{t\in\Z}$ compatible with \eqref{def:1} and \eqref{def:2}, starting from $X^{0}_{-\infty}=x $ for some fixed $ x\in \Omega^{adm}$. Under Assumption \ref{ass:4}, for any $i\in F$ satisfying $V_i \subset F$ the following holds.
\\
1. {\bf (Overestimation).} \label{thm:2I} For any $j\notin V_i$, we have that for any $\epsilon>0,$
$$
P\Big(j\in \hat{V}^{(\epsilon)}_{(i,n)}\Big)\leq  p_n^o := 4n^{3/2-\xi}\exp\left\{-\frac{\epsilon^2 n^{2\xi}}{2}\right\}.
$$
2. {\bf (Underestimation).} \label{thm:2II} 
The quantity $m_i$ defined in \eqref{eq:mi} is positive. In addition, for any $j\in V_i$ and $0<\epsilon<m_i$,
$$
P\left(j\notin \hat{V}^{(\epsilon)}_{(i,n)}\right)\leq p_n^u := 4\exp\left\{-\frac{(m_{i}-\epsilon)^2 n^{2\xi}}{2}\right\}+\exp\left\{-O\left(n^{1/2+\xi}\right)\right\}.  
$$
3. In particular, if we take $ \epsilon_n = O (n^{ - \xi/2} ),$ where $ \xi $ is the parameter appearing in \eqref{eq:xi}, then
$$  \hat{V}^{(\epsilon_n)}_{(i,n)} =V_i  \mbox{ eventually almost surely.}$$
\end{theo}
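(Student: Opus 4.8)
The plan is to deduce part 3 from the two quantitative bounds established in parts 1 and 2 via the first Borel--Cantelli lemma. First I would observe that, since $F$ is finite, the error event decomposes as
\begin{equation*}
\{\hat{V}^{(\epsilon_n)}_{(i,n)}\neq V_i\}\subseteq \bigcup_{j\notin V_i}\{j\in \hat{V}^{(\epsilon_n)}_{(i,n)}\}\ \cup\ \bigcup_{j\in V_i}\{j\notin \hat{V}^{(\epsilon_n)}_{(i,n)}\},
\end{equation*}
so that by the union bound $\P(\hat{V}^{(\epsilon_n)}_{(i,n)}\neq V_i)\leq |F|\,(p_n^o+p_n^u)$ for all $n$ large enough. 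It then suffices to show $\sum_n(p_n^o+p_n^u)<\infty$; Borel--Cantelli yields $\P(\hat{V}^{(\epsilon_n)}_{(i,n)}\neq V_i \text{ i.o.})=0$, which is exactly the claimed eventual almost sure equality $\hat{V}^{(\epsilon_n)}_{(i,n)}=V_i$.

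Next I would check summability of the overestimation term. Writing $\epsilon_n=c\,n^{-\xi/2}$, the exponent in $p_n^o$ becomes $\tfrac12\epsilon_n^2 n^{2\xi}=\tfrac{c^2}{2}n^{\xi}$, so
\begin{equation*}
p_n^o = 4\,n^{3/2-\xi}\exp\Big\{-\tfrac{c^2}{2}n^{\xi}\Big\}.
\end{equation*}
Since $\xi>0$, the factor $\exp\{-\tfrac{c^2}{2}n^{\xi}\}$ decays faster than any negative power of $n$ and dominates the polynomial prefactor $n^{3/2-\xi}$; hence $\sum_n p_n^o<\infty$.

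For the underestimation term I would first invoke the positivity of $m_i$ from part 2: because $\epsilon_n\to 0$ while $m_i>0$, there is $n_0$ with $0<\epsilon_n<m_i/2$ for all $n\geq n_0$, so the hypothesis $0<\epsilon<m_i$ of part 2 holds and moreover $(m_i-\epsilon_n)^2\geq m_i^2/4$. Thus for $n\geq n_0$ the first summand of $p_n^u$ is at most $4\exp\{-\tfrac{m_i^2}{8}n^{2\xi}\}$, which is summable, while the second summand $\exp\{-O(n^{1/2+\xi})\}$ decays even faster since $1/2+\xi>2\xi$ for $\xi\in(0,1/2)$. Combining the two estimates gives $\sum_n(p_n^o+p_n^u)<\infty$ and completes the argument.

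The argument is essentially routine once parts 1 and 2 are in hand; the only point requiring care is the \emph{joint} tuning of the rate $\epsilon_n=O(n^{-\xi/2})$. This exponent is precisely what simultaneously forces $\epsilon_n^2 n^{2\xi}=O(n^\xi)\to\infty$ (so the overestimation bound still decays despite the shrinking threshold) and $\epsilon_n\to0$ (so the threshold eventually drops below $m_i$ and the underestimation bound applies). A threshold decaying substantially faster would break the overestimation control, whereas a fixed threshold would fail to be uniform across neurons with small $m_i$; this is the delicate balance the statement exploits, and I expect it to be the only place where the choice of rate genuinely matters.
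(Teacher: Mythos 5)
Your argument is a correct and cleanly executed derivation of item 3 \emph{from} items 1 and 2: the union bound over the finite set $F$, the verification that $\epsilon_n = c\,n^{-\xi/2}$ makes $\epsilon_n^2 n^{2\xi} = c^2 n^{\xi} \to \infty$ while $\epsilon_n$ eventually drops below $m_i$, and the Borel--Cantelli conclusion are all sound, and your closing remark about the joint tuning of the rate correctly identifies why that particular exponent is chosen. For what it is worth, the paper itself offers no proof of this statement at all: it is quoted verbatim as Theorem 1 of Duarte et al.\ \cite{dglo} and used as a black box, so there is no in-paper argument to compare yours against.

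The genuine gap is that items 1 and 2 are themselves assertions of the theorem, not hypotheses, and they carry essentially all of its probabilistic content; your proposal takes them as ``established'' without any argument. Concretely, the overestimation bound requires showing that for $j \notin V_i$ the empirical probabilities $\hat p_{(i,n)}(1|w)$ and $\hat p_{(i,n)}(1|v)$, for pasts $v,w$ with $v_{\{j\}^c}=w_{\{j\}^c}$, concentrate around a common value --- a martingale/Hoeffding-type deviation inequality applied on the event $N_{(i,n)}(w)\geq n^{1/2+\xi}$, followed by a union bound over the admissible pasts, which is where the prefactor $4n^{3/2-\xi}$ originates. The underestimation bound requires two further ingredients absent from your write-up: a deterministic lower bound $|p(1|w)-p(1|v)|\geq m_i$ for a suitable pair of pasts differing only in the $j$-coordinate when $j\in V_i$, obtained via the mean value theorem applied to $\varphi_i$ on the compact interval $K_i$ (this is also where the positivity of $m_i$, which you invoke but do not prove, comes from Assumption \ref{ass:4}); and a proof that such a pair of pasts is actually observed at least $n^{1/2+\xi}$ times with probability $1-\exp\{-O(n^{1/2+\xi})\}$, which accounts for the second summand in $p_n^u$. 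Without these, what you have is a proof of the (routine) implication $1\wedge 2\Rightarrow 3$, not a proof of the theorem.
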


\subsection{Results for partially observed interacting neighborhoods}
\label{App:partial_obs}
 The aim of this section is to show that
$P(j\in \hat{V}^{(\epsilon)}_{(i,n)})\to 0$ as $n\to\infty$ when $j\in F$ is functionally independent of $i$ and $V_i$ is not contained in $F$. However, it turns out that the techniques used in \cite{dglo} do not apply anymore when $V_i$ is not contained in $F$. 
For that reason, we need to work with a slightly modified version  of $\hat{V}^{(\epsilon)}_{(i,n)}$. Specifically, for a given $M>0$ and $0<\xi<1/2$,
we consider the set
\begin{equation}\label{eq:xibis}
\mathcal{T}^M_{(i,n)}= \Big\{w\in  \bigcup_{\ell =1}^{M}  C^{F\setminus\{i\},\ell} :N_{(i,n)}(w)\geq n^{1/2+\xi} \ \Big\}.
\end{equation} 
Then, for each $w\in\mathcal{T}^M_{(i,n)}$ and for any $ j \in F \setminus \{i\} $, we introduce the set
\begin{equation*}
\mathcal{T}^{M,w,j}_{(i,n)}= \Big\{v \in \mathcal{T}^M_{(i,n)}: |v|=|w|,  v_{ \{j \}^c }   = w_{ \{ j \}^c }  \Big\}
\end{equation*}
and define
\begin{equation}\label{eq:deltabis}
\Delta^M_{(i,n)}(j)=\max_{w\in \mathcal{T}^M_{(i,n)}}\max_{v \in \mathcal{T}^{M,w,j}_{(i,n)}}|\hat{p}_{(i,n)}(1|w)-\hat{p}_{(i,n)}(1|v )|.
\end{equation}
Finally, for any $\varepsilon >0,$ the new estimated interaction neighborhood of neuron $i \in F,$ given the sample $X_1(F),...,X_n(F),$ is defined as
\begin{equation}\label{eq:defestimbis}
\hat V_{(i,n)}^{(M,\varepsilon)} = \left\{ j \in F \setminus \{i\} : \Delta^M_{(i,n)}(j) > \varepsilon \right\}.
\end{equation}
It can be proved that the consistency of $\hat V_{(i,n)}^{(M,\varepsilon)}$ also holds under the same assumptions of Theorem \ref{prop:noprune}. However this goes beyond the scope of the article.

In what follows, we will show that, for any fixed $M>0$, $P(j\in \hat{V}^{(M,\epsilon)}_{(i,n)})\to 0$ as $n\to\infty$ when $j\in F$ is functionally independent of $i$ and $V_i$ is not contained in $F$. To that end, we shall introduce some extra notation.

For any subset $A$ of $I$ and integer  $\ell\geq 1$, we denote
$
C^{A,\ell}= \{ 0, 1\}^{ \{ - \ell , \ldots , -1\} \times A}.
$
For any subset $A$  of $I\setminus\{i\}$ and configuration $w\in  C^{A,\ell}$, we then define
\begin{equation}
\label{def:trans.prob}
p_{i|A}(1|w)=\lim_{t\to\infty }P(X_t(i)=1|X^{t-1}_{t-(\ell+1)}(i)=10^{\ell},X^{t-1}_{t-\ell}(A)=w).
\end{equation}
Under Assumption \ref{ass:4}, it can be shown  that $p_{i|A}(1|w)$ is well-defined and does not depend on the initial condition $X_{-\infty}^0=x\in\Omega^{adm}$. One can also show that for $w\in  C^{A,\ell},$
\begin{equation} 
\label{def:rate_of_convergence}
\hat{p}_{(i,n)}(1|w)\to p_{i|A}(1|w) \ \mbox{ in probability as} \ n\to\infty.
\end{equation} 
For the details see for instance \cite{Shields:96}.
From  \eqref{def:trans.prob}  we obtain the following proposition.
\begin{proposition}
\label{Prop:local_indep}
Let $\ell\geq 1$ and $A\subset I\setminus\{i\}$. Under Assumption \ref{ass:4}, if neuron $j\in A$ is functionally independent of $i$, then for any $w,v\in  C^{A\setminus\{i\},\ell}$ satisfying  $w_{\{j\}^c}=v_{\{j\}^c}$,
\begin{equation}
p_{i|A}(1|w)=p_{i|A}(1|v).
\end{equation} 
\end{proposition}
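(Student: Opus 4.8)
The plan is to use functional independence to split the network into two dynamically non-interacting blocks, one containing $i$ and the other containing $j$, and then to show that the conditioning in \eqref{def:trans.prob} factorizes in such a way that the $j$-coordinate becomes irrelevant. First I would make the block decomposition explicit. Consider the undirected graph on $I$ in which $p$ and $q$ are joined by an edge whenever $W_{p\to q}\neq 0$ or $W_{q\to p}\neq 0$, and let $J\subseteq I$ be the connected component containing $j$. Functional independence of $i$ and $j$ says precisely that $i\notin J$, i.e.\ $i\in J^c:=I\setminus J$. The decisive structural fact is that there is no edge between $J$ and $J^c$: for all $k\in J$ and $k'\in J^c$ one has $W_{k\to k'}=W_{k'\to k}=0$. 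Consequently, in the transition rule \eqref{def:2}, the inner sum defining the spiking probability of a neuron of $J^c$ ranges effectively over $J^c$ only, so that probability depends on the past solely through $X^{t}_{-\infty}(J^c)$; the symmetric statement holds for neurons of $J$.

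Next I would upgrade this decoupling to genuine stochastic independence of the two sub-chains. Using the conditional independence \eqref{def:1}, I would realize the chain through an i.i.d.\ family of uniform variables $(U_t(k))_{t\geq 1,\,k\in I}$ by setting $X_{t+1}(k)=\one\{U_{t+1}(k)\le P(X_{t+1}(k)=1\mid X^{t}_{-\infty})\}$. An induction on $t$, using the previous paragraph, shows that $(X_t(J^c))_{t\geq 1}$ is a function of $\{U_s(k'):k'\in J^c\}$ together with the fixed initial datum $x(J^c)$, while $(X_t(J))_{t\geq 1}$ is a function of the disjoint family $\{U_s(k):k\in J\}$ and $x(J)$. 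Since the two families of uniforms are independent, the processes $(X_t(J^c))_t$ and $(X_t(J))_t$ are independent. Assumption \ref{ass:4} guarantees $p_*\le\varphi_k\le 1-p_*$, so all cylinder events appearing below have strictly positive probability and every conditional probability is well defined.

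Finally I would conclude by factorizing the conditioning. Set $A_{J}=A\cap J$ and $A_{J^c}=A\cap J^c$, and write $w_{A_{J^c}}$, $w_{A_J}$ for the restrictions of $w$ to these coordinates; note $j\in A_J$ and $i\in J^c$. For fixed $t$, let $B:=\{X^{t-1}_{t-(\ell+1)}(i)=10^{\ell}\}\cap\{X^{t-1}_{t-\ell}(A_{J^c})=w_{A_{J^c}}\}$ and $C:=\{X^{t-1}_{t-\ell}(A_{J})=w_{A_{J}}\}$; then $\{X_t(i)=1\}$ and $B$ are measurable with respect to the $J^c$-process, whereas $C$ is measurable with respect to the $J$-process. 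By independence, $P(X_t(i)=1\mid B\cap C)=P(X_t(i)=1\mid B)$, so the $J$-part of the conditioning drops out. If $w,v\in C^{A\setminus\{i\},\ell}$ satisfy $w_{\{j\}^c}=v_{\{j\}^c}$ then, since $j\in J$, we have $w_{A_{J^c}}=v_{A_{J^c}}$, so $w$ and $v$ determine the same event $B$; hence the conditional probabilities agree for every $t$, and letting $t\to\infty$ in \eqref{def:trans.prob} yields $p_{i|A}(1|w)=p_{i|A}(1|v)$.

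The step I expect to be the main obstacle is the rigorous justification of the independence of the two sub-chains: one must verify that the recursive construction never lets the randomness of one block leak into the other, and that conditioning on the positive-probability event $B\cap C$ factorizes cleanly across the two independent $\sigma$-algebras. Once this independence is secured, the remainder is bookkeeping about which coordinates of the past are governed by which block.
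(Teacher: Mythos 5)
Your proof is correct and follows essentially the same route as the paper's: both arguments reduce the claim to the stochastic independence of the block of neurons containing $j$ from the block containing $i$, observe that the conditional spiking probability of $i$ therefore cannot depend on the $j$-coordinate of the conditioning cylinder, and pass to the limit in \eqref{def:trans.prob}. The only difference is one of rigor --- the paper asserts this independence ``by definition'' after a without-loss-of-generality reduction to the case where $j$ is the only functionally independent neuron of $A$, whereas you actually establish it via the connected-component decomposition and the i.i.d.-uniform construction of the chain --- so you are filling in a step the paper leaves implicit rather than taking a different approach.
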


\begin{proof}
Without lost of generality, we assume that neuron $j$ is the only neuron in $A$ which is functionally independent of $i$. By definition, this means that for each $s\geq 1$, the variables $X_s(j)$ and $X_s(k),k\in (A\setminus\{j\})\cup\{i\}$, are independent.
As a consequence, if we write $B=A\setminus\{j\}$, $u=w_{\{j\}^c}$, $z=w^{-1}_{-\ell}(j)$ and $z'=v^{-1}_{-\ell}(j)$, it follows that  $u=v_{\{j\}^c}$ and
{\small 
\begin{eqnarray*}
\frac{P(X^{t}_{t-(\ell+1)}(i)=10^{\ell}1,X^{t-1}_{t-\ell}(B)=u|X^{t-1}_{t-\ell}(j)=z)}{P(X^{t-1}_{t-(\ell+1)}(i)=10^{\ell},X^{t-1}_{t-\ell}(B)=u|X^{t-1}_{t-\ell}(j)=z)}=\frac{P(X^{t}_{t-(\ell+1)}(i)=10^{\ell}1,X^{t-1}_{t-\ell}(B)=u|X^{t-1}_{t-\ell}(j)=z')}{P(X^{t-1}_{t-(\ell+1)}(i)=10^{\ell},X^{t-1}_{t-\ell}(B)=u|X^{t-1}_{t-\ell}(j)=z')}
\end{eqnarray*}
}
implying that
$$P(X_t(i)=1|X^{t-1}_{t-(\ell+1)}(i)=10^{\ell},X^{t-1}_{t-\ell}(A)=w)=P(X_t(i)=1|X^{t-1}_{t-(\ell+1)}(i)=10^{\ell},X^{t-1}_{t-\ell}(A)=v).$$
By taking the limit as $t\to\infty$ in the equality above, we conclude the proof.
\end{proof}


As a consequence of the Proposition \ref{Prop:local_indep}, we have the following.

\begin{proposition}
\label{prop:partial_obs}
Let $X_1 (F) ,\ldots, X_n(F)$ be a sample produced by a stochastic chain $(X_t)_{t\in\Z}$ compatible with \eqref{def:1} and \eqref{def:2}, starting from $X^{0}_{-\infty}=x $ for some fixed $ x\in \Omega^{adm}$. Under Assumption \ref{ass:4}, for any $i\in F$ and $j\in F\setminus\{i\}$ functionally independent of $i$, we have for any $M>0$ and $\epsilon>0$,
$$P\left(j\in \hat{V}^{(M,\epsilon)}_{(i,n)}\right)\to 0 \  \mbox{as} \ n\to\infty.$$
\end{proposition}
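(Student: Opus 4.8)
The plan is to show that the event $\{j\in \hat V^{(M,\epsilon)}_{(i,n)}\}=\{\Delta^M_{(i,n)}(j)>\epsilon\}$ forces at least one empirical transition probability $\hat p_{(i,n)}(1|w)$ to lie far from its deterministic limit, and then to control this through the convergence \eqref{def:rate_of_convergence} together with the finiteness of the relevant family of local pasts. Writing $A=F\setminus\{i\}$, I would first observe that, by \eqref{eq:deltabis}, on the event $\Delta^M_{(i,n)}(j)>\epsilon$ there exist $w\in\mathcal{T}^M_{(i,n)}$ and $v\in\mathcal{T}^{M,w,j}_{(i,n)}$ with $|\hat p_{(i,n)}(1|w)-\hat p_{(i,n)}(1|v)|>\epsilon$. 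Since $v$ agrees with $w$ in every coordinate except $j$, and $j$ is functionally independent of $i$, Proposition \ref{Prop:local_indep} gives $p_{i|A}(1|w)=p_{i|A}(1|v)$. The triangle inequality then yields
$$\epsilon<|\hat p_{(i,n)}(1|w)-p_{i|A}(1|w)|+|\hat p_{(i,n)}(1|v)-p_{i|A}(1|v)|,$$
so that at least one of the two summands exceeds $\epsilon/2$.

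Next I would exploit that $M$ is fixed. The set $\mathcal{C}:=\bigcup_{\ell=1}^{M}C^{A,\ell}$ of admissible local pasts of length at most $M$ is finite, since the alphabet $\{0,1\}^{A}$ is finite and only finitely many lengths occur. Both $w$ and $v$ above lie in $\mathcal{T}^M_{(i,n)}\subseteq\mathcal{C}$, and on $\{w\in\mathcal{T}^M_{(i,n)}\}$ one has $N_{(i,n)}(w)\geq n^{1/2+\xi}>0$, so $\hat p_{(i,n)}(1|w)$ is well-defined there. The previous step thus gives the inclusion
$$\{\Delta^M_{(i,n)}(j)>\epsilon\}\subseteq\bigcup_{w\in\mathcal{C}}\Big\{w\in\mathcal{T}^M_{(i,n)},\ |\hat p_{(i,n)}(1|w)-p_{i|A}(1|w)|>\epsilon/2\Big\},$$
and a union bound over the finite set $\mathcal{C}$ produces
$$P\big(\Delta^M_{(i,n)}(j)>\epsilon\big)\leq\sum_{w\in\mathcal{C}}P\big(|\hat p_{(i,n)}(1|w)-p_{i|A}(1|w)|>\epsilon/2\big).$$
Finally, each summand tends to $0$ as $n\to\infty$ by the convergence in probability \eqref{def:rate_of_convergence}, and since the index set $\mathcal{C}$ is finite and independent of $n$, the whole right-hand side tends to $0$. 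This is exactly the asserted conclusion $P(j\in\hat V^{(M,\epsilon)}_{(i,n)})\to 0$.

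The step I expect to be the crux — and the reason the modified estimator with a fixed cutoff $M$ is introduced in place of $\hat V^{(\epsilon)}_{(i,n)}$ — is precisely the finiteness used in the union bound. In the setting $V_i\subset F$ of Theorem \ref{prop:noprune} one can afford local pasts of length up to $n-2$ because the arguments of \cite{dglo} supply an explicit exponential concentration rate for $\hat p_{(i,n)}(1|w)$ that is uniform over the polynomially many relevant contexts. When $V_i\not\subset F$ those arguments break down: conditioning only on the observed coordinates, the rate of convergence of $\hat p_{(i,n)}(1|w)$ to $p_{i|A}(1|w)$ is no longer controlled uniformly, and \eqref{def:rate_of_convergence} is available merely as a pointwise convergence in probability for each fixed $w$. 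Capping the memory at a fixed $M$ keeps $\mathcal{C}$ finite so that pointwise convergence suffices, at the cost of obtaining only $P(j\in\hat V^{(M,\epsilon)}_{(i,n)})\to 0$ rather than an almost-sure statement with an explicit rate.
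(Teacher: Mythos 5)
Your proof is correct and follows essentially the same route as the paper's: both arguments combine Proposition \ref{Prop:local_indep} with the convergence \eqref{def:rate_of_convergence} and then use a union bound over the finitely many local pasts of length at most $M$ (the paper bounds $|\mathcal{T}^M_{(i,n)}|$ by a constant depending only on $M$, exactly the finiteness you isolate). Your explicit triangle-inequality step is merely an unpacking of the paper's statement that $\hat{p}_{(i,n)}(1|w)-\hat{p}_{(i,n)}(1|v)\to 0$ in probability for each such pair, and your closing remark about why the cutoff $M$ is needed matches the paper's own motivation for introducing $\hat V_{(i,n)}^{(M,\varepsilon)}$.
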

\begin{proof}
Proposition \ref{Prop:local_indep} and \eqref{def:rate_of_convergence} imply that for any $j\in F\setminus\{i\}$ functionally independent of $i$, $\ell\geq 1$ and 
$w,v\in  C^{F\setminus\{i\},\ell}$ such that $v_{\{j\}^c}=w_{\{j\}^c}$, it holds
\begin{equation}
\label{conver_in_prob}
\hat{p}_{(i,n)}(1|w)-\hat{p}_{(i,n)}(1|s)\to 0 \ \mbox{ in probability as} \ n\to\infty.
\end{equation}
Observe that for each $w\in \mathcal{T}^M_{(i,n)}$,  $|\mathcal{T}^{M,w,j}_{(i,n)}|\leq |\mathcal{T}^M_{(i,n)}|$. Since $|\mathcal{T}^M_{(i,n)}|$ is bounded by some constant depending only on $M$ and 
$$
P\left(j\in \hat{V}^{(M,\epsilon)}_{(i,n)}\right)=P\left(\cup_{w\in \mathcal{T}^M_{(i,n)}}\cup_{v \in \mathcal{T}^{M,w,j}_{(i,n)}}\{|\hat{p}_{(i,n)}(1|w)-\hat{p}_{(i,n)}(1|v )|>\epsilon\}\right),
$$
the result follows from \eqref{conver_in_prob}.
\end{proof}

\subsection{Pruning procedure}\label{sec:pruningproof}

We will show here that the pruning procedure described in section \ref{subsec:pruning} conserves the consistence property of the estimation.

We denote by  $\hat{V}^N_{(i,n)}$ the estimated neighborhood obtained after $N$ pruning steps.

We have the following result.

\begin{proposition}\label{prop:prune}

Grant Assumption \ref{ass:4}. Let $X_1 (F) ,\ldots, X_n(F)$ be a sample produced by a stochastic chain $(X_t)_{t\in\Z}$ compatible with \eqref{def:1} and \eqref{def:2}, starting from $X^{0}_{-\infty}=x $ for some fixed $ x\in \Omega^{adm}$.
Suppose additionally that $ V_i $ is finite and $ V_i \subset F .$
\\
1. {\bf (Overestimation).} \label{thm:2Iprune} For any $j\notin V_i$, we have that for any $\epsilon>0,$
$$
P\Big(j\in \hat{V}^N_{(i,n)}\Big)\leq  \left( 1- ( 1- p_n^u)^N \right)  + p_n^o,
$$
where we recall that $p_n^u$ and $p_n^o$ are defined in Theorem \ref{prop:noprune}.

2. {\bf (Underestimation).} \label{thm:2IIprune} 
Let $m_i=\inf_{u\in K_{i}}\left\{\varphi'(u)\right\}\inf_{j\in V_i}\left\{|W_{j\to i}|g_j(1)\right\}$, where $K_{i}$ is defined in \eqref{def:Kil}. Then $m_i>0$ and for any $j\in V_i$ and $0<\epsilon<m_i$,
$$
P\left(j\notin \hat{V}^N_{(i,n)}\right)\leq \left( 1- ( 1- p_n^u)^N \right)  + p_n^u.
$$
\end{proposition}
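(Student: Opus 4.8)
The plan is to reduce everything to Theorem \ref{prop:noprune} by isolating a single ``good'' event on which pruning never damages the candidate set. The key structural observation is that a pruning step removes a neuron $k$ only when it is \emph{conclusively estimated as absent} (that is, $E_{i,k}^n$ is realized and $k\notin\hat V^{(\epsilon)}_{(i,n)}$ as computed on the current candidate set). Hence the only way pruning can hurt is by removing a neuron that genuinely belongs to $V_i$, and such a removal is exactly an underestimation error for that neuron. As long as no element of $V_i$ has yet been removed, the current candidate set $G$ still satisfies $V_i\subseteq G\subseteq F\setminus\{i\}$; consequently Assumption \ref{ass:4} and the hypothesis $V_i\subset F$ remain in force for the reduced system, and the estimate recomputed on $G\cup\{i\}$ is a bona fide estimate to which the overestimation bound $p_n^o$ and the underestimation bound $p_n^u$ of Theorem \ref{prop:noprune} apply verbatim.

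First I would define the good event $\mathcal G=\{\text{no neuron of } V_i \text{ is pruned during the } N \text{ steps}\}$ and bound $P(\mathcal G^c)$. I would expose the pruning as a sequence of at most $N$ steps producing a decreasing chain $F\setminus\{i\}=G^{(0)}\supseteq G^{(1)}\supseteq\cdots\supseteq G^{(N)}$, and condition successively on the history of prunings. On the event that the first $m-1$ steps removed no element of $V_i$, we have $V_i\subseteq G^{(m-1)}$, so the estimate used at step $m$ is valid; a bad prune at step $m$ forces some neuron of $V_i$ to be estimated absent, an event of conditional probability at most $p_n^u$ by the underestimation part of Theorem \ref{prop:noprune} (a union bound over the finitely many neurons of $V_i$ contributes only a fixed constant factor, which I absorb). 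Multiplying the $N$ conditional survival probabilities gives $P(\mathcal G)\geq(1-p_n^u)^N$, i.e. $P(\mathcal G^c)\leq 1-(1-p_n^u)^N$. I would also record here that $m_i>0$ exactly as in Theorem \ref{prop:noprune}: $\varphi_i'$ is continuous and strictly positive on the compact interval $K_i$, while $V_i$ is finite with nonzero weights and strictly positive $g_j$.

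The two assertions then follow by splitting on $\mathcal G$. For overestimation, if $j\notin V_i$ then on $\mathcal G$ either $j$ has already been pruned, so that $j\notin\hat V^N_{(i,n)}$ automatically, or $j$ is still a candidate and the final estimate is valid, giving $P(j\in\hat V^N_{(i,n)}\cap\mathcal G)\leq p_n^o$; hence $P(j\in\hat V^N_{(i,n)})\leq P(\mathcal G^c)+p_n^o\leq(1-(1-p_n^u)^N)+p_n^o$. For underestimation, if $j\in V_i$ then on $\mathcal G$ the neuron $j$ is never pruned, so it remains a candidate in the final valid estimate and $P(j\notin\hat V^N_{(i,n)}\cap\mathcal G)\leq p_n^u$; hence $P(j\notin\hat V^N_{(i,n)})\leq P(\mathcal G^c)+p_n^u\leq(1-(1-p_n^u)^N)+p_n^u$.

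The main obstacle I anticipate is the conditioning step that yields the product $(1-p_n^u)^N$ rather than a cruder union bound $Np_n^u$. The candidate set $G^{(m-1)}$ at step $m$ is itself a data-dependent random object, so the per-step bad-prune events are not independent; the point to make rigorous is that Theorem \ref{prop:noprune}'s deviation estimates hold \emph{uniformly} over all admissible intermediate candidate sets containing $V_i$ (removing non-$V_i$ neurons changes neither Assumption \ref{ass:4} nor the relation $V_i\subset$ current set, hence neither $p_n^o$ nor $p_n^u$), so that conditioning on the pruning history through step $m-1$ and on $\{V_i\subseteq G^{(m-1)}\}$ still leaves the step-$m$ underestimation probability bounded by $p_n^u$. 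Formalizing this uniformity, and thereby justifying the telescoping product, is the delicate part; the remaining bookkeeping is routine.
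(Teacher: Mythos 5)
Your proposal is correct and follows essentially the same route as the paper: your good event $\mathcal{G}$ is the paper's $A^N$ (all neurons pruned through step $N$ lie outside $V_i$), the telescoping conditional product giving $P(A^N)\geq(1-p_n^u)^N$ is identical, and the final split $P(\cdot)\leq P(\overline{A^N})+P(\cdot\mid A^N)$ followed by an application of Theorem \ref{prop:noprune} on the event $\{V_i\subset F^N\}$ matches the paper's argument for both items. Your closing remark on the data-dependence of the intermediate candidate sets identifies a subtlety the paper's proof passes over silently, but it does not change the structure of the argument.
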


\begin{proof}
We start by proving item 1.
We denote by $F^N$ the set of neurons considered at the N-th pruning step and by $G^N:=F \setminus F^N$ the set of neurons pruned at the N-th pruning step.

We put $A^N:= \left\{ \forall k \in G^N, k \notin V_i \right\}$ for the event where all neurons pruned at the N-th pruning step are not in $V_i.$ 

For $N=1,$ we have $P\left( A^1 \right) \geq (1-p_n^u)$  by Theorem \ref{prop:noprune}, since this event corresponds exactly to the fact that the neuron pruned was not a false negative.

For $N=2,$ since $A^2 \subset A^1,$ we have
$$
P \left( A^2 \right)=P \left( A^2 / A^1 \right)P \left( A^1 \right) \geq (1-p_n^u ) P \left( A^2 / A^1 \right).
$$

When $A^1$ is realised, the realisation of $A^2$ corresponds to the fact that the second neuron pruned was not a false positive and therefore, applying Theorem \ref{prop:noprune}, we have $P \left( A^2 / A^1 \right) \geq (1-p_n^u).$ This gives us $P \left( A^2 \right) \geq (1-p_n^u)^2,$ and by recurrence, we obtain $P(A^N) \geq (1-p_n^u)^N.$ 

Now we write

$$
P \left( j\in \hat{V}^N_{(i,n)} \right)=P \left(  \left\{ j\in \hat{V}^N_{(i,n)} \right\} \cap \overline{A^N} \right) + P \left(  \left\{ j\in \hat{V}^N_{(i,n)} \right\} \cap A^N \right),
$$

where $\overline{A^N}:=\Omega \setminus A^N$ is the complementary event of $A^N.$

We have 
$$
P \left( j\in \hat{V}^N_{(i,n)} \right) \leq  P\left( \overline{A^N} \right)  + P \left( \left\{ j\in \hat{V}^N_{(i,n)} \right\} \Big/ A^N \right).
$$

Since $A^N = \left\{ V_i \subset F^N \right\},$  we can apply Theorem\ref{prop:noprune} to obtain $P \left( \left\{ j\in \hat{V}^N_{(i,n)} \right\} \Big/ A^N \right) \leq p_n^o.$

This gives us 

$$
P \left( j\in \hat{V}^N_{(i,n)} \right) \leq  \left( 1- ( 1- p_n^u)^N \right)  + p_n^o.
$$

The proof of item 2 follows the same steps.
\end{proof}

\end{appendix}
\bibliographystyle{humannat}
\bibliography{Bibli}{}

\begin{thebibliography}{}

\bibitem[\protect\astroncite{Aertsen et~al.}{1989}]{Aertsen1989}
Aertsen, A., G.~Gerstein, M.~Habib, and G.~Palm\leavevmode\nopagebreak\newline
  1989.
\newblock Dynamics of neuronal firing correlation: modulation of" effective
  connectivity".
\newblock {\em Journal of neurophysiology}, 61(5):900--917.

\bibitem[\protect\astroncite{Aertsen and Preissl}{1991}]{Aertsen1991}
Aertsen, A. and H.~Preissl\leavevmode\nopagebreak\newline 1991.
\newblock Dynamics of activity and connectivity in physiological neuronal
  networks.
\newblock {\em Nonlinear Dynamics and Neuronal Networks}, 2:281--301.

\bibitem[\protect\astroncite{Brillinger}{1976}]{Brillinger_1976}
Brillinger, D.\leavevmode\nopagebreak\newline 1976.
\newblock {E}stimation of the second-order intensities of a bivariate
  stationary point process.
\newblock {\em Journal of the Royal Statistical Society. Series B
  (Methodological)}, 38(1):60--66.

\bibitem[\protect\astroncite{Brillinger}{1988}]{brill88}
Brillinger, D.~R.\leavevmode\nopagebreak\newline 1988.
\newblock Maximum likelihood analysis of spike trains of interacting nerve
  cells.
\newblock {\em Biological Cybernetics}, 59(3):189--200.

\bibitem[\protect\astroncite{Brillinger}{1992}]{Brillinger_1992}
Brillinger, D.~R.\leavevmode\nopagebreak\newline 1992.
\newblock {N}erve cell spike train data analysis: a progression of technique.
\newblock {\em Journal of the American Statistical Association},
  87(418):260--271.

\bibitem[\protect\astroncite{Brillinger et~al.}{1976}]{BrillingerEtAl_1976}
Brillinger, D.~R., H.~L. Bryant, and J.~P.
  Segundo\leavevmode\nopagebreak\newline 1976.
\newblock {I}dentification of synaptic interactions.
\newblock {\em Biological Cybernetics}, 22(4):213--228.

\bibitem[\protect\astroncite{Brochini et~al.}{2016}]{Brochinietal:16}
Brochini, L., A.~A. Costa, M.~Abadi, A.~C. Roque, J.~Stolfi, and
  O.~Kinouchi\leavevmode\nopagebreak\newline 2016.
\newblock Phase transitions and self-organized criticality in networks of
  stochastic spiking neurons.
\newblock {\em Scientific Reports}, 6:35831.

\bibitem[\protect\astroncite{Brochini et~al.}{2017}]{SupMat}
Brochini, L., H.~Galves, P.~Hodara, G.~Ost, and
  C.~Pouzat\leavevmode\nopagebreak\newline 2017.
\newblock Supplementary material to ``interaction graph estimation for the
  first olfactory relay of an insect.".

\bibitem[\protect\astroncite{Chornoboy et~al.}{1988}]{ChornoboyEtAl_1988}
Chornoboy, E., L.~Schramm, and A.~Karr\leavevmode\nopagebreak\newline 1988.
\newblock Maximum likelihood identification of neural point process systems.
\newblock {\em Biological Cybernetics}, 59(4):265--275.

\bibitem[\protect\astroncite{De~Masi et~al.}{2015}]{Errico:14}
De~Masi, A., A.~Galves, E.~L\"ocherbach, and
  E.~Presutti\leavevmode\nopagebreak\newline 2015.
\newblock Hydrodynamic limit for interacting neurons.
\newblock {\em Journal of Statistical Physics}, 158(4):866--902.

\bibitem[\protect\astroncite{Duarte et~al.}{2016}]{dglo}
Duarte, A., A.~Galves, E.~L\"ocherbach, and
  G.~Ost\leavevmode\nopagebreak\newline 2016.
\newblock Estimating the interaction graph of stochastic neural dynamics.
\newblock {\em ArXiv}.

\bibitem[\protect\astroncite{Duarte and Ost}{2016}]{AG:14}
Duarte, A. and G.~Ost\leavevmode\nopagebreak\newline 2016.
\newblock A model for neural activity in the absence of external stimulus.
\newblock {\em Markov Processes And Related Fields}, 22(1):37--52.

\bibitem[\protect\astroncite{Duarte et~al.}{2015}]{Duarte_Ost:15}
Duarte, A., G.~Ost, and A.~A. Rodr\'iguez\leavevmode\nopagebreak\newline 2015.
\newblock Hydrodynamic limit for spatially structured interacting neurons.
\newblock {\em Journal of Statistical Physics}, 161(5):1163--1202.

\bibitem[\protect\astroncite{Fitzhugh}{1958}]{Fitzhugh_1958}
Fitzhugh, R.\leavevmode\nopagebreak\newline 1958.
\newblock A {S}tatistical {A}nalyzer for {O}ptic {N}erve {M}essages.
\newblock {\em The Journal of General Physiology}, 41(4):675--692.

\bibitem[\protect\astroncite{Fournier and L{\"o}cherbach}{2016}]{Evafou:14}
Fournier, N. and E.~L{\"o}cherbach\leavevmode\nopagebreak\newline 2016.
\newblock On a toy model of interacting neurons.
\newblock In {\em Annales de l'Institut Henri Poincar{\'e}, Probabilit{\'e}s et
  Statistiques}, volume~52, Pp.~ 1844--1876. Institut Henri Poincar{\'e}.

\bibitem[\protect\astroncite{Friston}{2011}]{Friston2011}
Friston, K.\leavevmode\nopagebreak\newline 2011.
\newblock Functional and effective connectivity: a review.
\newblock {\em Brain Connectivity}, 1(1):13--36.

\bibitem[\protect\astroncite{Friston}{1994}]{Friston1994}
Friston, K.~J.\leavevmode\nopagebreak\newline 1994.
\newblock Functional and effective connectivity in neuroimaging: a synthesis.
\newblock {\em Human brain mapping}, 2(1-2):56--78.

\bibitem[\protect\astroncite{Galves and L\"ocherbach}{2013}]{GalEva:13}
Galves, A. and E.~L\"ocherbach\leavevmode\nopagebreak\newline 2013.
\newblock Infinite systems of interacting chains with memory of variable
  length: A stochastic model for biological neural nets.
\newblock {\em Journal of Statistical Physics}, 151(5):896--921.

\bibitem[\protect\astroncite{Gerstein and Perkel}{1969}]{Gerstein1969}
Gerstein, G.~L. and D.~H. Perkel\leavevmode\nopagebreak\newline 1969.
\newblock Simultaneously recorded trains of action potentials: analysis and
  functional interpretation.
\newblock {\em Science}, 164(3881):828--830.

\bibitem[\protect\astroncite{Hodara and L\"ocherbach}{2017}]{PierreEva:14}
Hodara, P. and E.~L\"ocherbach\leavevmode\nopagebreak\newline 2017.
\newblock Hawkes processes with variable length memory and an infinite number
  of components.
\newblock {\em To appear in Advances in Applied Probability}, 49.

\bibitem[\protect\astroncite{Horwitz}{2003}]{Horwitz2003}
Horwitz, B.\leavevmode\nopagebreak\newline 2003.
\newblock The elusive concept of brain connectivity.
\newblock {\em Neuroimage}, 19(2):466--470.

\bibitem[\protect\astroncite{Laurent}{1996}]{Laurent_1996}
Laurent, G.\leavevmode\nopagebreak\newline 1996.
\newblock Dynamical representation of odors by oscillating and evolving neural
  assemblies.
\newblock {\em Trends in neurosciences}, 19(11):489--496.

\bibitem[\protect\astroncite{Moore et~al.}{1970}]{Moore70}
Moore, G.~P., J.~P. Segundo, D.~H. Perkel, and
  H.~Levitan\leavevmode\nopagebreak\newline 1970.
\newblock Statistical signs of synaptic interaction in neurons.
\newblock {\em Biophysical Journal}, 10(9):876--900.

\bibitem[\protect\astroncite{Okatan et~al.}{2005}]{OkatanEtAl_2005}
Okatan, M., M.~A. Wilson, and E.~N. Brown\leavevmode\nopagebreak\newline 2005.
\newblock {A}nalyzing functional connectivity using a network likelihood model
  of ensemble neural spiking activity.
\newblock {\em Neural Computation}, 17(9):1927--1961.

\bibitem[\protect\astroncite{Perkel et~al.}{1967a}]{PerkelEtAl_1967}
Perkel, D.~H., G.~L. Gerstein, and G.~P. Moore\leavevmode\nopagebreak\newline
  1967a.
\newblock {N}euronal spike trains and stochastic point processes: {I}. {T}he
  single spike train.
\newblock {\em Biophysical journal}, 7:391--418.

\bibitem[\protect\astroncite{Perkel et~al.}{1967b}]{PerkelEtAl_1967b}
Perkel, D.~H., G.~L. Gerstein, and G.~P. Moore\leavevmode\nopagebreak\newline
  1967b.
\newblock {N}euronal spike trains and stochastic point processes: {II}.
  {S}imultaneous spike trains.
\newblock {\em Biophysical journal}, 7:419--440.

\bibitem[\protect\astroncite{Pouzat et~al.}{2002}]{pouzat2002}
Pouzat, C., O.~Mazor, and G.~Laurent\leavevmode\nopagebreak\newline 2002.
\newblock Using noise signature to optimize spike-sorting and to assess
  neuronal classification quality.
\newblock {\em Journal of neuroscience methods}, 122(1):43--57.

\bibitem[\protect\astroncite{Pouzat et~al.}{2015}]{pouzat2015}
Pouzat, C., O.~Mazor, and G.~Laurent\leavevmode\nopagebreak\newline 2015.
\newblock {Extracellular recordings from the locust, Schistocerca americana,
  olfactory pathway}.

\bibitem[\protect\astroncite{Reynaud-Bouret
  et~al.}{2014}]{Reynaud-Bouret+:2014}
Reynaud-Bouret, P., V.~Rivoirard, F.~Grammont, and
  C.~Tuleau-Malot\leavevmode\nopagebreak\newline 2014.
\newblock Goodness-of-fit tests and nonparametric adaptive estimation for spike
  train analysis.
\newblock {\em Journal of Mathematical Neuroscience}, 4:3.

\bibitem[\protect\astroncite{Rissanen}{1983}]{Rissanen:1983}
Rissanen, J.\leavevmode\nopagebreak\newline 1983.
\newblock A universal data compression system.
\newblock {\em IEEE Transactions on Information Theory}, 29(5):656--664.

\bibitem[\protect\astroncite{Robert and Touboul}{2016}]{RobTou:14}
Robert, P. and J.~Touboul\leavevmode\nopagebreak\newline 2016.
\newblock On the dynamics of random neuronal networks.
\newblock {\em Journal of Statistical Physics}, 165(3):545--584.

\bibitem[\protect\astroncite{Rodieck et~al.}{1962}]{RodieckEtAl_1962}
Rodieck, R.~W., N.~Y.-S. Kiang, and G.~L.
  Gerstein\leavevmode\nopagebreak\newline 1962.
\newblock {S}ome quantitative methods for the study of spontaneous activity of
  single neurons.
\newblock {\em Biophysical Journal}, 2(2):351--368.

\bibitem[\protect\astroncite{S.Hagiwara}{1954}]{Hagiwara:1954}
S.Hagiwara\leavevmode\nopagebreak\newline 1954.
\newblock Analysis of interval fluctuation of the sensory nerve impulse.
\newblock {\em The Japanese Journal of Physiology}, 4:234--240.

\bibitem[\protect\astroncite{Shields}{1996}]{Shields:96}
Shields, P.~C.\leavevmode\nopagebreak\newline 1996.
\newblock {\em The Ergodic Theory of Discrete Sample Paths (Graduate Studies in
  Mathematics, V. 13)}.
\newblock {Amer Mathematical Society}.

\bibitem[\protect\astroncite{Stevenson and
  Kording}{2011}]{StevensonKording2011}
Stevenson, I.~H. and K.~P. Kording\leavevmode\nopagebreak\newline 2011.
\newblock How advances in neural recording affect data analysis.
\newblock {\em Nature Neuroscience}, 14(2):139--142.

\bibitem[\protect\astroncite{Truccolo et~al.}{2005}]{TruccoloEtAl_2005}
Truccolo, W., U.~T. Eden, M.~R. Fellows, J.~P. Donoghue, and E.~N.
  Brown\leavevmode\nopagebreak\newline 2005.
\newblock A point process framework for relating neural spiking activity to
  spiking history, neural ensemble, and extrinsic covariate effects.
\newblock {\em Journal of Neurophysiology}, 93:1074--1089.

\bibitem[\protect\astroncite{Yaginuma}{2016}]{Karina:15}
Yaginuma, K.\leavevmode\nopagebreak\newline 2016.
\newblock A stochastic system with infinite interacting components to model the
  time evolution of the membrane potentials of a population of neurons.
\newblock {\em Journal of Statistical Physics}, 163(3):642--658.

\end{thebibliography}

\begin{figure}[!p]
\includegraphics[scale=0.42]{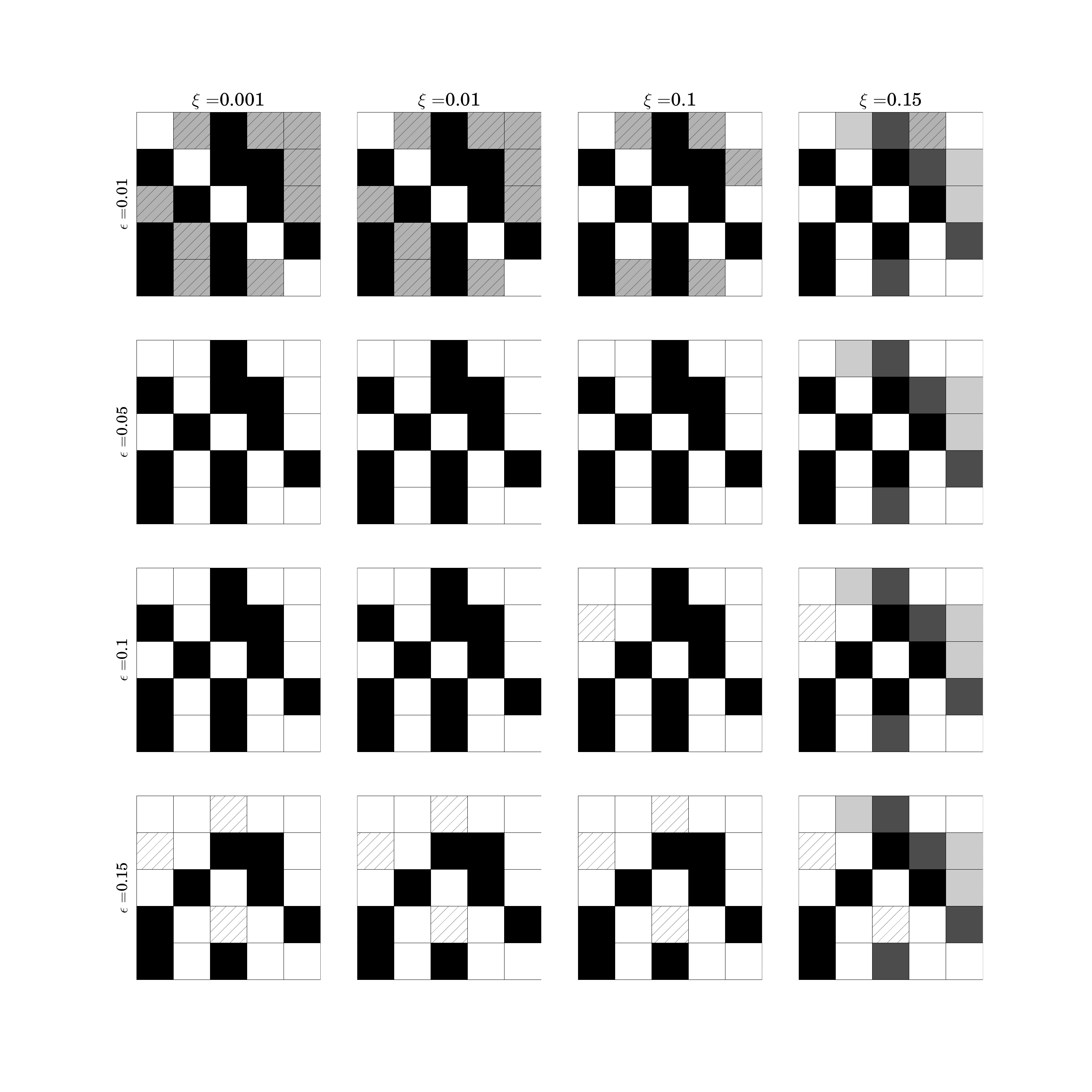}
\caption{Influence graph estimations for a simulated dataset of a GL network of 5 neurons for various values of parameters $(\epsilon,\xi)$, estimated for the same dataset with $n=10^6$. Each element in the panel is a color coded representation of connections, where rows correspond to presynaptic neurons and columns to postsynaptic neurons. Colors indicate the comparison of the estimated connectivity graph to the true connectivity matrix used in simulation to generate this dataset. Black cells correspond to a true connection correctly identified by the estimator. White ones indicate there is no connection, correctly identified as absent by the estimator. Hatched white and grey cells correspond to false negative and false positive, respectively. Grey cells  correspond to inconclusives, where there is not enough repetitions of patterns to produce an estimation. Grey cells can be of two types: dark and light grey, in order to differentiate respectively  the cases where there is or not a true connection.}
		
 \label{fig:simxieps}
\end{figure}

\begin{figure}[!p]
\centering
\makebox[\textwidth][c]{ \includegraphics[width=\linewidth]{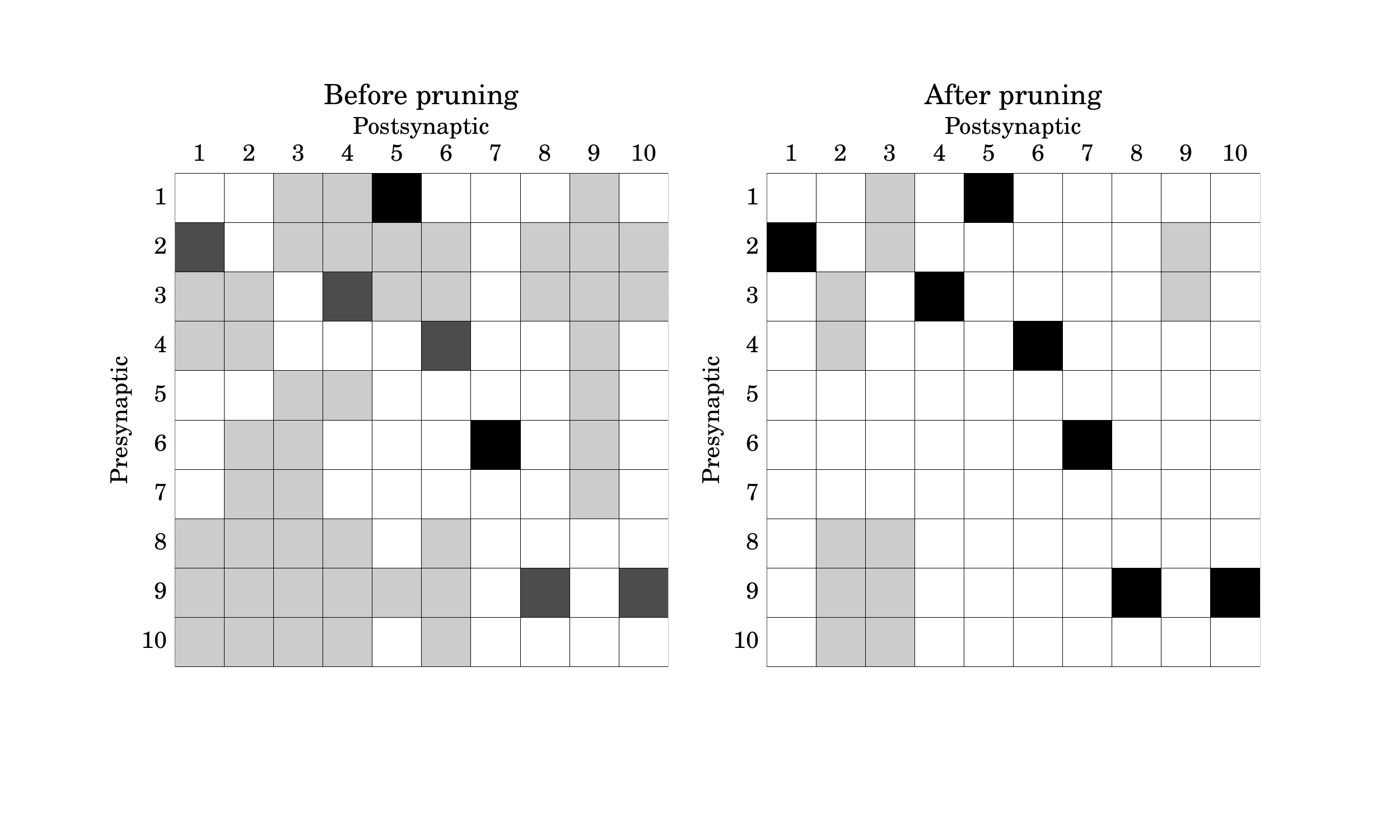}}
\caption{ Color code: Black: existent connection correctly identified by estimator. White connection: non-existent connection correctly identified as such by estimator. Dark and light grey: inconclusives corresponding respectively to existent or non-existent connections. Original estimator produces too many inconclusives {\bf (A)} for simulated dataset produced by a network of 10 neurons with $n=\num{2e5}$.  After several prunings, we obtain a closer graph estimation. The final estimation correctly identifies all existing connections for this network, but a few inconclusives remain where there are no connections.}
\label{fig:pruning}
\end{figure}

\begin{figure}[!ht]
  \includegraphics[width=1.0\linewidth]{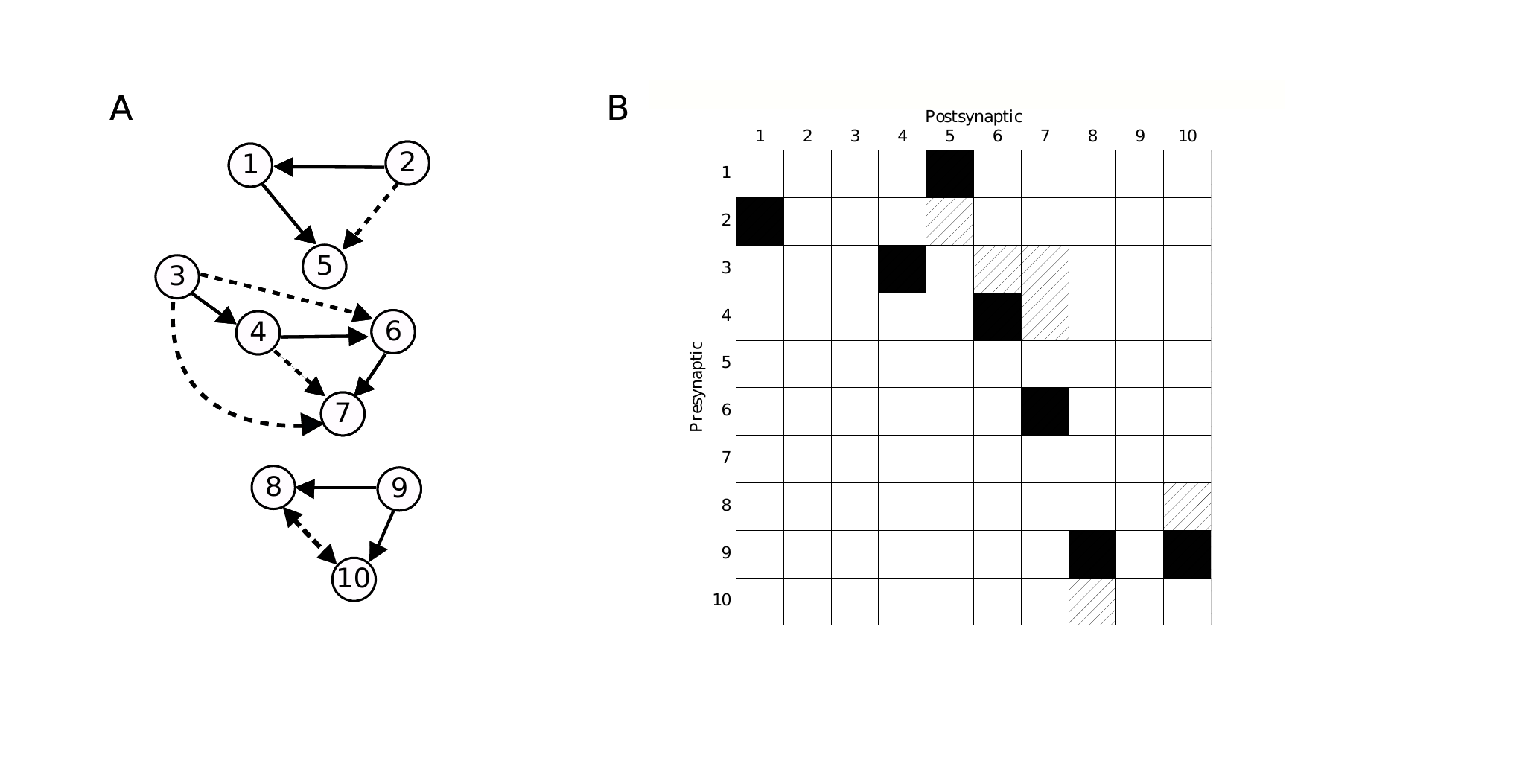}
\caption{A- Network scheme with $10$ neurons. Arrows with black lines represent direct connections. Arrows with dashed black lines represent first and second order projections. B- Complete graph recovered by procedure that identifies false positives due to projections. Black: connection accepted as true. White: connection concluded as false. Hatched: identified projections. The procedure correctly identifies true connections and projections for this network from a small sample size of $n=\num{2e5}$.}

\label{ProjMap}
\end{figure}

\begin{figure}[!p]
\centering
\makebox[\textwidth][c]{ \includegraphics[width=\linewidth]{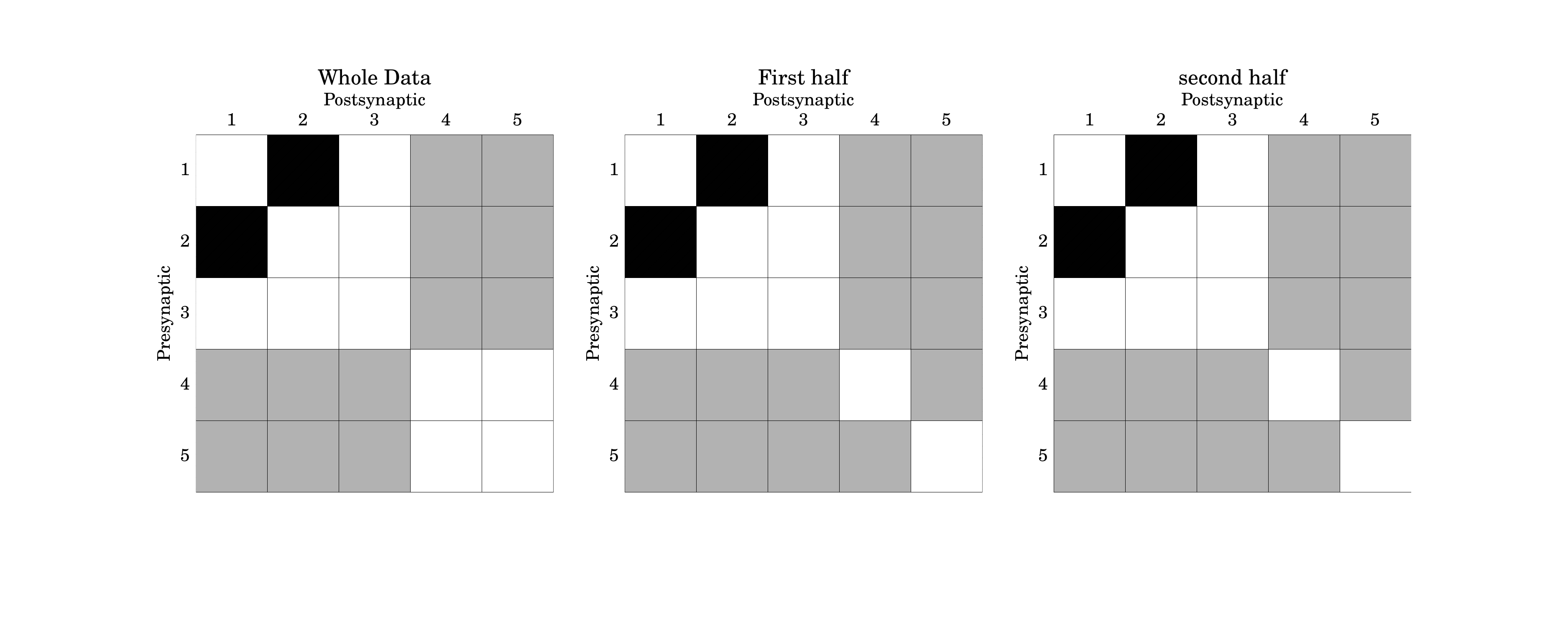}}
\caption{Graph estimation for the whole dataset (left), for the first half (middle) and the second half of the dataset (right). Black: estimated connection. White: estimator produces no connection. Grey: inconclusives}
\label{fig:truedata}
\end{figure}

\end{document}